\newtheorem{remark}{Remark}
\newtheorem{theorem}{Theorem}
\newtheorem{lemma}{Lemma}
\newtheorem{assumption}{Assumption}
\newtheorem{proposition}{Proposition}
\newtheorem{definition}{Definition}
\newenvironment{proof}{\par{\em{Proof: }}}{\hfill$\square$\\}
\newcommand{\matr}[1]{\begin{pmatrix} #1 \end{pmatrix}}
\begin{document}
\begin{frontmatter}

\title{A port-Hamiltonian framework for displacement-based and rigid formation tracking}

\author[First] {Ningbo Li,} 
\author[Second] { Zhiyong Sun,}
\author[First] { Arjan van der Schaft,}
\author[First] { Jacquelien M. A. Scherpen} 
\address[First] {Jan C. Willems Center for Systems and Control, University of Groningen, The Netherlands (e-mail: ningbo.li@rug.nl, a.j.van.der.schaft@rug.nl, and j.m.a.scherpen@rug.nl).}
\address[Second]{Department of Electrical Engineering, Eindhoven University of Technology, 5600 MB Eindhoven, The Netherlands. (e-mail: z.sun@tue.nl).}

\begin{abstract}       
This paper proposes a passivity-based port-Hamiltonian (pH) framework for multi-agent displacement-based and rigid formation control and velocity tracking. The control law consists of two parts, where the internal feedback is to track the velocity and the external feedback is to achieve formation stabilization by steering variables of neighboring agents that prescribe the desired geometric shape. Regarding the external feedback, a general framework is proposed for all kinds of formations by means of the advantage that the pH model is energy-based and coordinate-free. To solve the issue that the incidence matrix is not of full column  rank over cyclic graphs, the matrix property is used to prove the convergence to the target sets for the displacement-based formation, while for rigid formations, the algebraic conditions of infinitesimal rigidity are investigated to achieve asymptotic local stability. Furthermore, the rigid formation with heterogeneous constraints is further investigated under this framework and the asymptotic local stability is proved under a mild assumption. Simulations are performed to illustrate the effectiveness of the framework.
\end{abstract}

\begin{keyword}
Port-Hamiltonian, displacement-based formation, rigidity, formation tracking
\end{keyword}
\end{frontmatter}

\section{INTRODUCTION}

Formation control is a popular control problem within the broad class of coordination control which aims at a group behavior to achieve a prescribed geometric shape. It has been extensively investigated as motivated by many promising applications; see e.g., the survey papers, \cite{beard2001coordination}, \cite{ren2007information}, \cite{oh2015survey}, \cite{cortes2017coordinated}, \cite{liu2018survey}, and \cite{chen2019control} and the references therein. According to different formation variables that define the geometric formation shape, formation control can be classified as either displacement-based formations or rigid formations. Regarding displacement-based formation, the target geometric shape is defined by relative positions between neighboring agents. The convergence is investigated based on the multi-agent consensus theory; see e.g., \cite{olfati2004consensus}, \cite{ren2005consensus}. For rigid formations, one can employ distances, bearings, or angles, which can be regarded as partial information about the relative positions, to define a target shape. Since less information is used, more constraints should be imposed to uniquely define the formation shape, which involves the graph rigidity theory \cite{anderson2008rigid}, \cite{sun2018cooperative}. On the other hand, since the control objectives of rigid formations are represented by functions of a norm, the resulting control laws for formation stabilization are often nonlinear which imposes challenges for formation convergence analysis \cite{oh2015survey}.   

The objective of multi-agent group coordination is to achieve a prescribed group behavior using distributed feedback laws that only employ information about neighboring agents. In practice, it is often assumed that  neighboring agents have access to each other's information based on an underlying undirected graph, while the resulting closed-loop formation system will inherit the passivity properties of  agents' dynamics  \cite{van2000l2}, \cite{chopra2006passivity}, and \cite{bai2011cooperative}. Therefore, the passivity approach is a favorable tool to analyze the stability of the interconnected multi-agent system.  

In \cite{arcak2007passivity}, the notion of passivity is used for group coordination. The control objective is to steer the differences of output variables of neighboring agents to a prescribed compact set. By extending this idea to formation stabilization and velocity tracking, the passivity-based group coordination can be interpreted as the following: the controller consists of internal and external feedback, while the internal feedback renders the agents passive with respect to the desired velocity, and the external feedback steers formation shape variables of neighboring agents that prescribe the desired geometric shape, which asymptotically stabilize the formation shape variables to a given formation shape.   

Since the dynamics of the external feedback are associated with the edges, the convergence analysis is  related to the incidence matrix of the graph. For group coordination in \cite{arcak2007passivity}, a necessary condition to achieve convergence of system states to a target set is the linear independence of the columns of the incidence matrix. However, in the context of formation control, this condition is often not satisfied, for example in the displacement-based formation with cyclic graphs. Also in the case of rigid formations, the underlying graph always contains cycles, which causes the number of columns of the incidence matrix to be greater than the rank. Therefore, this should be further investigated to guarantee asymptotic stabilization to a given formation. 

The first objective of this paper is to extend the passivity framework to displacement-based formations and rigid formations modeled by cyclic graphs. To solve the above issues,
regarding the displacement-based formation with cyclic graphs, the convergence to the target sets is proved by using incidence matrix property of the underlying graph. As for the rigid formations, we have investigated the relationship between infinitesimal rigidity and the time-derivative of the Hamiltonian to prove asymptotic  local stability. In addition, regarding angle-based formation, we proposed the framework for the model of double integrators with a damping term. To the best of our knowledge, existing research, such as \cite{jing2019angle}, \cite{basiri2010distributed}, \cite{chen2020angle}, and \cite{chen2022globally} only considers the dynamics of single integrators. Furthermore, the last three papers did not give a suitable Lyapunov function for stability analysis. 

The construction of the passivity framework is accomplished by exploiting the pH formulation of passive systems, which offers a number of advantages. First, pH models are convenient for modelling interconnected dynamic systems in terms of scalability \cite{van2013port}, making them suitable to represent interconnected systems in complex networks. Since in this modeling framework the dissipation and the energy storage structure of the system are underscored, passivity-based control techniques arise as a natural option to control pH systems. Second, pH modeling is energy-based and coordinate-free, which gives a general framework for the modelling and analysis of different kinds of formations. 
Third, compared with other existing formation research, the pH approach is more suitable for complex and heterogeneous agent dynamics. Different from most existing literature where the agent is modeled as a single or double integrator and the formation shape is defined by one kind of formation variables such as position, distance, bearing, or angle, the pH approach can be further applied to the systems where the agents are modeled by more complex dynamics and the formation shapes are given by different kinds of constraints. In this paper, we have further established the passivity-based pH framework for rigid formations with mixed constraints that involve displacement, distance, bearing, and angle constraints. 

The original idea of this paper was first published in the conference paper  \cite{li2022passivity}. In this paper, we present more details about constructing this framework. In addition, we extend the framework to  angle-based formations and the rigid formations with heterogeneous constraints. The main contributions of this paper can be summarized as follows:
\begin{itemize}
 \item Our passivity approach for velocity tracking and formation control is formulated in the pH framework, which is suitable to represent complex and heterogeneous agent dynamics and is more favorable in terms of scalability. In addition, it gives a general framework for different kinds of formations due to the advantage that the model is energy-based and coordinate-free.
 \item We extend the passivity-based group coordination framework to the displacement-based and rigid formation control. To solve the issue that the incidence matrix is not of  full column rank over cyclic graphs, regarding the displacement-based formation, the convergence to the target sets is proved by employing the matrix property of the graph incidence matrix. Concerning rigid formations, we use the algebraic conditions of infinitesimal rigidity to prove asymptotic local stability.
 \item Regarding angle-based formation, compared with the existing results, we propose a general framework for the modeling of double integrators with a damping term and give a suitable Lyapunov function for stability analysis. Regarding the rigid formations with heterogeneous constraints, we define the completely infinitesimally rigid framework and establish a formation system that involves different kinds of formation geometric constraints.
\end{itemize}

The rest of the paper is structured as follows. The preliminaries are given in Section 2. The frameworks for displacement-based formation and rigid formations are presented in Sections 3 and 4, respectively. Simulations are provided in Section 5, and concluding remarks appear in Section 6.

\section{PRELIMINARIES}

\subsection{Port-Hamiltonian (pH) systems}

The pH systems theory brings together port-based modeling, geometric mechanics, and system and control theory in physical system modeling and analysis, in order to obtain a clear representation of physical processes. A standard input-state-output pH system \cite{van2014port},\cite{duindam2009modeling} is formulated as
\begin{align}
\dot{x}&=(J(x)-R(x))\frac{\partial{H}}{\partial{x}}(x)+g(x)u, \nonumber \\
y&=g^T(x)\frac{\partial{H}}{\partial{x}}(x),
\end{align}
where $x\in\mathbb{R}^n$ is the state, $u\in\mathbb{R}^m$ is the input, and $y\in\mathbb{R}^m$ is the output. Furthermore $J(x)=-J^T(x)\in\mathbb{R}^{n\times{n}}$ is the skew-symmetric interconnection matrix, $R(x)=R^T(x)\ge0\in\mathbb{R}^{n\times{n}}$ is the positive semi-definite dissipation matrix, and $H(x)$ is the Hamiltonian that equals the total   energy stored in the system. It is easy to verify that the time-derivative of $H(x)$ satisfies $\dot{H}(x)\leq y^Tu$, which leads to the passivity of the system under the assumption that $H(x)$ is bounded from below. Otherwise, the system is cyclo-passive. This passivity property is often used to prove the stability of the closed-loop system. In contrast to other modeling approaches, the pH formulation highlights the interconnection structure related to the exchange of energy. This description of systems is suited for passivity-based control, whose basic ideas are energy-shaping and control by interconnection.   

\subsection{Passivity for group coordination}

Passivity is a favorable design tool for multi-agent systems since the feedback interconnection structure ensures that the closed-loop system inherits the passivity property of its components. In addition, it also allows for modeling complex agent dynamics and is scalable to a large number of subsystems. The passivity framework for group coordination was introduced in \cite{arcak2007passivity}. 

Consider a group of $N$ agents with the topology of information exchange between these agents described by a graph $\mathcal{G}(\mathcal{V}_N,\mathcal{E}_E)$. It consists of a node set $\mathcal{V}$, where $\mathcal{V}=\{1,2,...,N\}$, and an edge set $\mathcal{E}\subseteq \mathcal{V} \times \mathcal{V}$, where $\mathcal{E}=\{e_1,e_2,...,e_E\}$. The incidence matrix $B \in \mathbb{R}^{N\times E}$ describes the relationship between the nodes and the edges, and it takes the following form:
$$
b_{ik}=
\begin{cases}
+1 & \text{if node $i$ is at the positive side of edge $k$},\\
-1 & \text{if node $i$ is at the negative side of edge $k$},\\
0  & \text{otherwise}.
\end{cases}
$$

We suppose each agent is modeled in pH form as a single point mass in $\mathbb{R}^d$. The position of each agent $i$ is denoted as $q_i \in \mathbb{R}^d$ and the corresponding momentum is defined as $p_i=m_i\dot{q}_i \in \mathbb{R}^d$, where $m_i$ is the mass of agent $i$. The dynamics of all agents are given in the compact form by
\begin{align}	\label{dyph}
\begin{split}
\matr{\dot{q}  \\ \quad \\ \dot{p}} &= \matr{0 & I_{Nd}\\ \quad \\-I_{Nd} & -{D}^r}  \matr{ \frac{\partial{H}}{\partial{q}}(q,{p}) \\ \quad \\ \frac{\partial{H}}{\partial{p}}(q,{p}) } + \matr{0\\ \quad\\I_{Nd}}u 
\\
y&=\frac{\partial{H}}{\partial{p}}(q,{p})
\end{split}
\end{align} 
where $u\in \mathbb{R}^{Nd}$ is the input, $y\in \mathbb{R}^{Nd}$ is the output, and $D^r \in\mathbb{R}^{Nd\times Nd}$ is a positive semi-definite dissipation matrix, which enables to model the viscous friction. $I_{Nd}\in \mathbb{R}^{Nd \times Nd}$ is an identity matrix. The Hamiltonian consists of the kinetic energy associated with the movement of the mass and takes the following form $H=\sum_{i=1}^{N}H_i=\frac{1}{2}\sum_{i=1}^{N}p^T_iM_i^{-1}p_i$, where $M_i=m_iI_d$.  
 
The coordination objective for the group behavior consists of two parts. One is that the velocity of each agent converges to a prescribed common value, i.e., $$\lim_{t \to \infty}|\dot{q}_i-v^*|=0.$$ The other is that the different variables associated with the edges
\begin{equation}   \label{relap}
z=(B^T\otimes I_d)q
\end{equation} 
converge to a prescribed compact set $\Xi \subset \mathbb{R}^{E\times d}$, where $\Xi =\{z_1^*,z_2^*, ..., z_E^* \}$ and $z_j^*, j \in 1,2,...,E$ is the prescribed difference variable associated with edges $j$.

To achieve the first objective, an internal feedback is designed for every single agent, which renders its dynamics passive from the designed feedback input $u_i$ to the velocity error $y_i:=\dot{q_i}-v^*$, i.e., the whole group achieves velocity tracking. 

For the second objective, an external feedback is designed based on relative difference variables associated with edges. This can be written in a compact form as $$u=-(B\otimes I_d)\phi,$$ where $\phi:\mathbb{R}^{M\times d} \to \mathbb{R}^{M\times d}$ are multivariable nonlinearities to be designed to render the prescribed compact set $\Xi \subset \mathbb{R}^{M\times d}$ invariant and asymptotically stable.

The set of desired equilibria is given by
\begin{equation}	\label{sete}
\mathcal{E}=\{(z,\xi)|\xi=0, (B\otimes I_d)\phi(z)=0, z\in \mathcal{R}(B^T\otimes I_d)\},
\end{equation} 
where $\xi=\dot{q}-\mathbf{1}_N v^*$, and $\mathcal{R}(\cdot)$ denotes the range space.

\section{DISPLACEMENT-BASED FORMATION}

We consider a group of $N$ agents which are fully actuated as described by \eqref{dyph}. The interaction among the group of agents is modeled by an undirected, connected graph $\mathcal{G}(\mathcal{V}_N,\mathcal{E}_E)$, where the nodes are associated with agents and the edges are associated to interactions between the agents.

In displacement-based formation, the measurements are the relative positions of the agents. The control law for formation stabilization is usually   based on consensus theory \cite{ren2007information}, \cite{hernandez2020consensus}, \cite{garciademarina2020maneuvering}.   The convergence result employs the Laplacian matrix property that the rank of the Laplacian matrix is $N-1$ if the underlying undirected graph is connected.   
Moreover, from the perspective of the passivity framework for group coordination, the relative difference variables in (\ref{relap}) for displacement-based formation are relative positions. Since the controller dynamics are associated with the edges, the incidence matrix is used to transform the control force from the edge space to the node space. In this regard, the convergence depends on the kernel space of the incidence matrix. 

The underlying graph of a network does not contain cycles if and only if the columns of incidence matrix $B$ are linearly independent. Equivalently, the kernel space of $B\otimes I_d$ is trivial, and $(B\otimes I_d)\phi(z)=0$ implies $\phi(z)=0$, and hence $z\in \Xi$. If the graph contains cycles, the columns of the incidence matrix $B$ are linearly dependent. In this case, the kernel space of $B\otimes I_d$ is not trivial; therefore, the formation shape variables may not converge to a given formation shape. On the other hand, if a group of agents is connected via a cyclic graph, rather than an acyclic tree graph, the cyclic graph structure enhances the robustness of the formation system: if one of the agents fails, the remaining graph is still connected as an acyclic or cyclic graph, and the whole system still works. Therefore, cyclic graphs are important in displacement-based formation.

To address this issue, we design a passivity approach for displacement-based formation stabilization and velocity tracking for a group of fully actuated agents modeled in (\ref{dyph}).

In terms of velocity tracking, the following generalized canonical coordinate transformations \cite{fujimoto2003trajectory} are introduced for the pH model in order to derive the error dynamics
\begin{equation*}
\matr{\bar{q}_i(t) \\ \bar{p}_i(t)} = \matr{{q}_i-v^*t \\ p_i-M_iv^*},
\end{equation*} 
where $v^* \in \mathbb{R}^d$ is the prescribed desired velocity. The Hamiltonian for velocity tracking is given as $$H^v=\sum_{i=1}^{N}(H_i+U_i)=\frac{1}{2}\sum_{i=1}^{N}({p}^T_iM_i^{-1}{p}_i-2p_i^Tv_i^*+{v_i^*}^TM_iv_i^*),$$  
where $H_i$ is the kinetic energy for each agent, $U_i$ is a fictitious potential. When $\dot{q}_i$ for all $i=1,2,...,N$ converges to $v^*$, $H^v$ has the minimum value.

To eliminate the tracking error, the corresponding control law consisting of two terms is given by 
\begin{align}	\label{uv}
\begin{split}
u^v_{i}=-\frac{\partial U_i}{\partial p_i}-\frac{\partial H^v_i}{\partial {p}_i}=-D^r_iv^*-D_i^tM_i^{-1}\bar{p}_i.
\end{split}
\end{align} 
where $D_i^t \in \mathbb{R}^{d \times d}$ is a positive semi-definite dissipation matrix. The first term, $-\frac{\partial U_i}{\partial p_i}$, ensures the velocity converging to the desired value, while the second term, $-D_i^t\frac{\partial H^v_i}{\partial {p}_i}$, improves the transient performance of the convergence. 

\begin{proposition}
Consider a group of agents modeled   in (\ref{dyph}). Using the control law (\ref{uv}), the system converges to the desired velocity $v^*$.
\end{proposition}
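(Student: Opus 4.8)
The plan is to push the control law \eqref{uv} through the coordinate change $(\bar q_i,\bar p_i)=(q_i-v^*t,\,p_i-M_iv^*)$, observe that the closed loop becomes an \emph{autonomous} pH system whose Hamiltonian is exactly $H^v$, and then obtain convergence from a LaSalle-type argument with $H^v$ as Lyapunov function. By the theory of generalized canonical transformations \cite{fujimoto2003trajectory} the change of variables together with the feedback preserves the pH structure, so this is really just a matter of identifying the new dissipation.

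First I would substitute $u^v_i$ into \eqref{dyph} and rewrite everything in the error variables. Using $\partial H/\partial q\equiv 0$, the identity $M_i^{-1}p_i=M_i^{-1}\bar p_i+v^*$, and $\dot{\bar p}_i=\dot p_i$, the per-agent error dynamics collapse to
$$
\dot{\bar q}_i = M_i^{-1}\bar p_i,\qquad \dot{\bar p}_i = -(D^r_i+D^t_i)\,M_i^{-1}\bar p_i ,
$$
i.e. a pH system $\col(\dot{\bar q},\dot{\bar p})=\left(\begin{smallmatrix}0 & I\\ -I & -\hat D\end{smallmatrix}\right)\col\!\big(\tfrac{\partial H^v}{\partial\bar q},\tfrac{\partial H^v}{\partial\bar p}\big)$ with $\hat D=\mathrm{blockdiag}(D^r_i+D^t_i)\succeq 0$ and $H^v=\tfrac12\sum_i\bar p_i^\top M_i^{-1}\bar p_i$. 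The purpose of the first term of \eqref{uv}, $-\tfrac{\partial U_i}{\partial p_i}$, is precisely to cancel the residual term in $v^*$ produced by the momentum shift, so that the error system has equilibrium $\bar p=0$; the second term only adds the extra damping $D^t_i$.

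Next I would take $H^v$ as Lyapunov function. It is nonnegative, vanishes exactly at $\bar p=0$, and along solutions $\dot H^v=-\sum_i(M_i^{-1}\bar p_i)^\top(D^r_i+D^t_i)(M_i^{-1}\bar p_i)\le 0$, the $\partial H^v/\partial\bar q$ contribution dropping out because $H^v$ is independent of $\bar q$. Hence $H^v$ is nonincreasing and $\bar p(t)$ stays bounded; since the $\bar p$-subsystem is decoupled from $\bar q$ and time-invariant, LaSalle's invariance principle applies to it and every trajectory converges to the largest invariant set inside $\{\bar p:\dot H^v=0\}=\{\bar p:(D^r_i+D^t_i)M_i^{-1}\bar p_i=0\ \forall i\}$. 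On that set $\dot{\bar p}_i=0$, so each $\bar p_i$ is constant, and under the mild assumption that $D^r_i+D^t_i\succ 0$ for every $i$ (which can always be enforced by choosing $D^t_i\succ 0$) the only such constant is $\bar p_i=0$. Therefore $\bar p_i\to 0$, equivalently $\dot q_i=M_i^{-1}p_i=M_i^{-1}\bar p_i+v^*\to v^*$, which is the claim.

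The substitution and the sign bookkeeping in the error dynamics are routine; the one place that needs care is the use of LaSalle, because the sublevel sets of $H^v$ in the full state $(\bar q,\bar p)$ are cylinders, not compact sets. This is circumvented by the observation above that the $\bar p$-dynamics depend neither on $\bar q$ nor on $t$, so the argument can be run on the $\bar p$-subsystem alone; alternatively one can simply integrate the linear ODE $\dot{\bar p}=-\hat D\,\mathrm{blockdiag}(M_i^{-1})\,\bar p$ and read off exponential convergence of $\bar p$ to $0$ when $\hat D$ is positive definite. Convergence of $\bar q$ to a constant then follows as well, but it is not required for the velocity-tracking statement.
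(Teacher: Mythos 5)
Your proof is correct and follows the same route the paper intends: take $H^v$ as the Lyapunov function for the error dynamics obtained after the canonical coordinate shift (the paper simply omits these details). The two points you flag are exactly the details needed to make that one-line argument rigorous—namely that convergence of $\bar p$ requires $D^r_i+D^t_i\succ 0$ (or an equivalent definiteness condition, which the paper leaves implicit), and that LaSalle should be applied to the decoupled, autonomous $\bar p$-subsystem (or the linear ODE solved directly), since sublevel sets of $H^v$ in $(\bar q,\bar p)$ are not compact.
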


The proof directly follows by taking $H^v$ as the Lyapunov function, and we omit the details here.

For formation stabilization, the displacement error of edge $j$ is defined as $\bar{z}_j=z_j-z_j^*$, where $z_j^*$ is the desired displacement of edge $j, j=1,2,..., E$. To achieve the control objectives, virtual couplings are assigned among the neighboring agents \cite{van2014port}. The corresponding Hamiltonian is equal to the virtual potential energy of the edges and is given by $H^f=\frac{1}{2}\sum_{j=1}^{E}\bar{z}_j^T\bar{z}_j$. For each edge, when $z_j$ converges to $z_j^*$, $H_i^f$ attains its minimum value. Therefore, when $z\in \Xi$, $H^f$ has the minimum value. Note that since the positions of the agents are not controlled, the terms that contain $q$ are not included in the virtual potential energy.

The dynamics of the formation control system associated with the edges of the graph are given by 
\begin{align}	\label{vc}
\begin{split}
\dot{z}_j&=\omega^z_j,\\
\tau_j^z&=\frac{\partial H^f_j}{\partial z_j}+D^f_j\omega^z_j,
\end{split}
\end{align} 
where $\omega^z_j \in \mathbb{R}^d$ and $\tau^z_j \in \mathbb{R}^d$ are the input and output of the controller, respectively, and $D^f_j \in \mathbb{R}^{d\times d} \geq 0$ is the dissipation matrix. 

The original systems are described by the dynamics of the agents, while the controllers are designed by the dynamics on the edges. Therefore, the incidence matrix $B$ is used to establish the interconnection of the original system and the controllers, in the compact form given by
\begin{align}	\label{uy}
\begin{split}
u^f=-(B\otimes I_d)\tau^z,\\
\omega^z=(B^T\otimes I_d)y.
\end{split}
\end{align} 

According to (\ref{vc}) and (\ref{uy}), the control law for formation stabilization follows directly as
\begin{align}
    u^f=-(B\otimes I_d)\bar{z}-(B\otimes I_d)D^f(B^T\otimes I_d)M^{-1}p,
\end{align}
where the spring term, $-(B\otimes I_d)\bar{z}$, is to determine the stability of the formation shape, while the damping term, $-(B^T\otimes I_d)D^f(B\otimes I_d)M^{-1}p$, is to improve the transient performance of convergence.

In general, the total control input is given by
\begin{align} \label{uc}
u^c=u^v+u^f.    
\end{align}
The first term $u^v$ is an internal control law for velocity tracking, where the agents only need the information about their own. The second term $u^f$ is an external control law for formation control, where the agents need information about their neighbors.   

The total Hamiltonian is defined as $H^c=H^v+H^f$. In this case, the closed-loop system is given by
\begin{align}	\label{clsys}
\begin{split}
\matr{\dot{\bar{q}}  \\ \quad \\ \dot{\bar{p}} \\ \quad \\ \dot{\bar{z}} } = \matr{0 & I_{Nd} & 0 \\ \quad \\-I_{Nd} & -{D}^F & -B\otimes I_d \\ \quad \\ 0 & {B}^T\otimes I_d  & 0}  \matr{ \frac{\partial{H^c}}{\partial{\bar{q}}}(\bar{q},\bar{p}) \\ \quad \\ \frac{\partial{H^c}}{\partial{\bar{p}}}(\bar{q},\bar{p}) \\ \quad \\ \frac{\partial{H^c}}{\partial{\bar{z}}}(\bar{q},\bar{p}) }  
\end{split}
\end{align} 
where $D^F=D^r+D^t+(B\otimes I_d)D^f(B^T\otimes I_d)$.

We obtain the following theorem on the displacement-based formation and velocity tracking.
\begin{theorem} \label{theo1}
Consider a group of agents modeled in port-Hamiltonian form as in (\ref{dyph}), and assume that the graph is undirected and connected. Then the control law (\ref{uc}) achieves the desired formation while each agent tracks the desired velocity. 
\end{theorem}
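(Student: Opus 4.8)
The plan is to use the total Hamiltonian $H^c = H^v + H^f$ as a Lyapunov function for the closed-loop system \eqref{clsys} and then invoke LaSalle's invariance principle, with the displacement-based case being favorable because $H^f$ is quadratic in $\bar z$ and the "spring" term is linear. First I would compute $\dot H^c$ along \eqref{clsys}: because the interconnection matrix decomposes into a skew-symmetric part and the dissipation block $D^F = D^r + D^t + (B\otimes I_d) D^f (B^T\otimes I_d) \ge 0$, the skew-symmetric terms cancel and one obtains $\dot H^c = -\left(\frac{\partial H^c}{\partial \bar p}\right)^T D^F \frac{\partial H^c}{\partial \bar p} = -\bar p^T M^{-1} D^F M^{-1} \bar p \le 0$. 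Here $\frac{\partial H^c}{\partial \bar p} = M^{-1}\bar p = \dot q - \mathbf 1_N\otimes v^*$ is exactly the velocity tracking error, so $\dot H^c \le 0$ with equality forcing (at least on the part of $D^F$ that is definite, e.g. $D^t > 0$) the velocity error to zero.

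Next I would establish that trajectories are bounded so that LaSalle applies: $H^c$ is radially unbounded in $\bar p$ (kinetic-energy part, shifted) and in $\bar z$ (the quadratic $\tfrac12 \bar z^T \bar z$), hence sublevel sets are compact in the $(\bar p, \bar z)$ coordinates; the $\bar q$ coordinate does not enter $H^c$ but also does not enter the $\bar z$-dynamics except through $\bar p$, so one works on the reduced state $(\bar p, \bar z)$ (or notes $\bar q$ is not required to converge). Then I would characterize the largest invariant set contained in $\{\dot H^c = 0\}$. On this set $\dot q \equiv v^*$, so $\bar p \equiv 0$ and $\dot{\bar p} \equiv 0$; feeding this into the $\bar p$-equation of \eqref{clsys} gives $0 = -\frac{\partial H^c}{\partial \bar q} - (B\otimes I_d)\frac{\partial H^c}{\partial \bar z} = -(B\otimes I_d)\bar z$, i.e. $(B\otimes I_d)\bar z = 0$. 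Also $\dot{\bar z} = \frac{\partial H^c}{\partial \bar p} = M^{-1}\bar p \equiv 0$, so $\bar z$ is constant on the invariant set. Thus the invariant set is contained in $\mathcal E$ as in \eqref{sete}, with $\bar z \in \ker(B\otimes I_d)$.

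The main obstacle is precisely the kernel of $B\otimes I_d$ when the graph has cycles: $(B\otimes I_d)\bar z = 0$ does not immediately give $\bar z = 0$, i.e. convergence to $\Xi$. The resolution is the matrix property emphasized in Section 3: although $\bar z$ itself need not vanish, recall $z = (B^T\otimes I_d)q$ lives in $\mathcal R(B^T\otimes I_d) = (\ker(B\otimes I_d))^\perp$, and the desired displacements $z^*$ are assumed realizable, $z^* \in \mathcal R(B^T\otimes I_d)$ as well; hence $\bar z = z - z^* \in \mathcal R(B^T\otimes I_d)$, which is orthogonal to $\ker(B\otimes I_d)$. Combining $\bar z \in \ker(B\otimes I_d)$ with $\bar z \in \mathcal R(B^T\otimes I_d)$ forces $\bar z = 0$, hence $z \to \Xi$ and the formation is achieved. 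I would close by noting this uses only connectedness of the graph (so that the rank deficiency of $B$ is exactly the cycle space, and $z^*$ being a genuine displacement vector lies in the cut space), and that velocity tracking follows from $\bar p \to 0$ as in Proposition 1; together these give the claim of Theorem~\ref{theo1}.
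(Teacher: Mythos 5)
Your proposal is correct and follows essentially the same route as the paper: the Hamiltonian $H^c$ as Lyapunov function, $\dot H^c\le 0$ through the dissipation block $D^F$, LaSalle yielding $\bar p=0$ and hence $(B\otimes I_d)\bar z=0$, and the resolution of the cyclic case via $\bar z\in\mathcal R(B^T\otimes I_d)=(\ker(B\otimes I_d))^\perp$. Your version is in fact slightly more careful than the paper's, since you make explicit the realizability assumption $z^*\in\mathcal R(B^T\otimes I_d)$ and the boundedness of sublevel sets needed for LaSalle, and you treat the acyclic and cyclic cases uniformly rather than splitting them.
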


\begin{proof}
Take the Hamiltonian $H^c$ as the Lyapunov function. It follows that $H^c(\bar{q},\bar{p},\bar{z})\ge 0$. Taking the time-derivative of $H^c$, we have $$\dot{H}^c=-\frac{\partial^T H^c}{\partial \bar{p}}(D^r+D^t+(B\otimes I_d)D^f(B^T \otimes I_d))\frac{\partial H^c}{\partial \bar{p}}.$$ It follows that $\dot{H}^c \leq 0$. Furthermore, by invoking LaSalle's Invariance Principle the system converges to the largest invariant set $\{p|\bar{p}=0\}$. It follows that $\dot{\bar{p}}=0$. Substituting $\bar{p}=0, \dot{\bar{p}}=0$ into (\ref{clsys}) gives $$-(B\otimes I_d) \frac{\partial{H^c}}{\partial{\bar{z}}}=-(B\otimes I_d)\bar{z}=0.$$

Next, we consider two different cases of the underlying connected graphs. 
\begin{itemize}
 \item \textbf{Acyclic graph}. If the connected graph is acyclic, the columns of the incidence matrix $B$ are linearly independent. It follows that the kernel of $B$ is a null set. Therefore, $\bar{z}=0$, which means that the agents achieve the desired formation shape, thus completing the proof.
 \item \textbf{Cyclic graph}. For a cyclic graph, the columns of incidence matrix $B\otimes I_d$ are linearly dependent, which implies that the kernel of $B\otimes I_d$ is not a null set anymore. However, according to the definition of difference variables $z=(B\otimes I_d)^Tq$, it follows that $\bar{z}$ is in the range space of $B^T\otimes I_d$. Since the range space of $B^T\otimes I_d$ is orthogonal to the kernel of $B\otimes I_d$, it implies $\bar{z}=0$, thus completing the proof.
\end{itemize} 
The proof of the theorem statement is thus completed. 
\end{proof}

\section{RIGID FORMATIONS}
In the case of rigid formations, the relative difference variables under consideration are distances, bearings, angles, and combinations of all measurements. When the constraints of the formation involve more geometric variables than displacement, the graph usually contains cycles, which implies that the kernel of the incidence matrix is not zero anymore. To address this issue, we propose the passivity approach in pH form for the modeling and stabilization analysis of these rigid formations. 

The controller for rigid formation tracking consists of two parts, where the internal feedback is for velocity tracking and the external feedback is for formation stabilization. Since there is no coupling between the velocity and position dynamics, the same control law as in (\ref{uv}) can still be used for velocity tracking. Therefore, we only consider the formation stabilization in this section. 

\subsection{Distance-based formation}
In order to formulate the concept of rigidity more clearly, the definition of the framework in formation control is given as follows \cite{ahn2020formation}.

\begin{definition}
Given a graph $\mathcal{G}(\mathcal{V},\mathcal{E})$, an associated framework  $f_q$ is a realization of the underlying  graph with node coordinate variables $q$, i.e. $f_q=(\mathcal{G},q)$.
\end{definition}

The distance of edge $j$ that associates agents $i$ and $k$ is defined as $$||z_j||=||q_i-q_k||.$$ 

In terms of the distance rigidity, the potential function is defined as $$h^d=[||z_1||^2, ||z_2||^2,...,||z_E||^2]^T.$$ 

The time-derivative of $h^d$ is given as
\begin{align} \label{drm}
 \dot{h}^d=\frac{\partial^T h^d}{\partial q}\dot{q}= {\rm blkdiag}(z_1^T,z_2^T,...,z_E^T)(B^T \otimes I_d)\dot{q},   
\end{align}
where $R^d=\frac{\partial^T h^d}{\partial q} \in \mathbb{R}^{E\times Nd}$ is defined as the distance rigidity matrix. For more details on distance rigidity, we refer the readers to \cite{anderson2008rigid}, \cite{sun2018cooperative}, \cite{hendrickson1992conditions}. We quote the following {\it Lemma \ref{lemd1}} from \cite{hendrickson1992conditions} and {\it Lemma \ref{lemd2}} from \cite{sun2018cooperative}.
\begin{lemma} \label{lemd1}
Assume the number of agents $N$ is greater than the dimension $d$ of the ambient space of the framework. A framework $f=(\mathcal{G}_N(\mathcal{V}_N,\mathcal{E}_E),q)$ is infinitesimally distance rigid (IDR) in $\mathbb{R}^d$ if and only if $rank(R_d)=dN-d(d+1)/2$.
\end{lemma}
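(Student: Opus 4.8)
The plan is to deduce the rank condition from a dimension count on $\ker R^d$, i.e. on the space of infinitesimal flexes of the framework. First I would make the relevant objects precise: an \emph{infinitesimal flex} (infinitesimal motion) of $f=(\mathcal{G}_N,q)$ is a vector $\delta q=\col(\delta q_1,\dots,\delta q_N)\in\mathbb{R}^{Nd}$ with $R^d\,\delta q=0$, which by (\ref{drm}) is equivalent to the $E$ scalar equations $(q_i-q_k)^T(\delta q_i-\delta q_k)=0$, one per edge $j$ joining $i$ and $k$. By definition the framework is infinitesimally distance rigid exactly when every such $\delta q$ is \emph{trivial}, meaning it is the velocity field of a rigid motion of the ambient space.

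Second I would describe the space $\mathcal{T}_q$ of trivial infinitesimal motions as $\{\,\delta q_i=Aq_i+b\ (i=1,\dots,N)\mid A=-A^T\in\mathbb{R}^{d\times d},\ b\in\mathbb{R}^d\,\}$, the linearizations of rotations and translations of $\mathbb{R}^d$. A one-line computation using skew-symmetry, $(q_i-q_k)^TA(q_i-q_k)=0$, shows $\mathcal{T}_q\subseteq\ker R^d$ for \emph{every} configuration $q$. The crucial step is then to establish $\dim\mathcal{T}_q=d(d+1)/2=d+\tfrac{d(d-1)}{2}$, i.e. that the linear map $(A,b)\mapsto(Aq_i+b)_{i=1}^{N}$ from the $\big(d+\tfrac{d(d-1)}{2}\big)$-dimensional space of pairs (skew $A$, vector $b$) into $\mathbb{R}^{Nd}$ is injective. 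Injectivity holds iff the points $q_1,\dots,q_N$ are not contained in a proper affine subspace of $\mathbb{R}^d$: if $Aq_i+b=0$ for all $i$, then $A(q_i-q_1)=0$ for all $i$, so $A$ kills $\mathrm{span}\{q_i-q_1\}=\mathbb{R}^d$, forcing $A=0$ and hence $b=0$. This affine-span property is precisely where the hypothesis $N>d$ (together with the genericity/general-position assumption implicit in the statement, under which $R^d$ attains its maximal rank) is used.

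Finally I would close with the rank--nullity theorem. Since $\mathcal{T}_q\subseteq\ker R^d$ always and $\dim\mathcal{T}_q=d(d+1)/2$, the framework is infinitesimally distance rigid iff $\ker R^d=\mathcal{T}_q$, iff $\dim\ker R^d\le d(d+1)/2$, iff $\dim\ker R^d=d(d+1)/2$; then $\mathrm{rank}(R^d)=Nd-\dim\ker R^d$ gives $\mathrm{rank}(R^d)=dN-d(d+1)/2$, as claimed. Both implications of the biconditional fall out of this single equivalence, so no separate converse argument is needed.

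I expect the dimension count $\dim\mathcal{T}_q=d(d+1)/2$ to be the main obstacle — it is the only place a nondegeneracy hypothesis on $q$ genuinely enters, and it is easy to overlook that $N>d$ by itself does not force the points to affinely span $\mathbb{R}^d$; one must work with generic frameworks (or explicitly assume general position), which is the standard setting in which Lemma~\ref{lemd1} is invoked.
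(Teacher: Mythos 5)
A preliminary remark: the paper itself contains no proof of Lemma \ref{lemd1} --- it is quoted from \cite{hendrickson1992conditions} --- so your argument can only be measured against the standard proof in the rigidity literature, and that is indeed the skeleton you reconstruct: identify $\ker R^d$ with the space of infinitesimal flexes, exhibit the trivial motions $\mathcal{T}_q=\{(Aq_i+b)_{i=1}^N \,:\, A=-A^T,\ b\in\mathbb{R}^d\}\subseteq\ker R^d$, compute $\dim\mathcal{T}_q$, and finish with rank--nullity. That outline is sound.

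The genuine gap is your treatment of configurations that do not affinely span $\mathbb{R}^d$. The lemma carries no genericity or general-position hypothesis --- its only assumption is $N>d$ --- yet your key count $\dim\mathcal{T}_q=d(d+1)/2$ holds only for spanning configurations, and you dispose of the remaining case by appealing to a genericity assumption ``implicit in the statement''. No such assumption is implicit (and in the paper the lemma is applied to a specific desired framework $q^*$, not to a generic one), so as written you prove a strictly weaker statement. The biconditional does hold for every configuration with $N>d$, because in the non-spanning case both sides fail, and establishing this is where $N>d$ genuinely enters. Concretely, if $q_1,\dots,q_N$ lie in an affine subspace of dimension $k<d$, choose coordinates adapted to it; each row of $R^d$ then has nonzero entries only in the first $k$ coordinates of every $d$-block, so $R^d$ has the rank of the projected rigidity matrix in $\mathbb{R}^k$ and your own dimension count applied in $\mathbb{R}^k$ gives $\mathrm{rank}(R^d)\le kN-k(k+1)/2$. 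Since $dN-d(d+1)/2-\bigl(kN-k(k+1)/2\bigr)=(d-k)\bigl(N-(d+k+1)/2\bigr)>0$ for $N\ge d+1$ and $k\le d-1$, the rank condition fails; the same bound yields $\dim\ker R^d\ge (d-k)N+k(k+1)/2>d(d+1)/2\ge\dim\mathcal{T}_q$, so $\ker R^d$ strictly contains $\mathcal{T}_q$ and the framework is not infinitesimally distance rigid either. Adding this case closes both directions of the equivalence without any genericity hypothesis; otherwise you must state explicitly that your proof covers only affinely spanning configurations, which is less than what the lemma (and its use in Theorem \ref{theo2}) asserts.
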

\begin{lemma} \label{lemd2}
 If a framework $f=(\mathcal{G}_N(\mathcal{V}_N,\mathcal{E}_E),q)$ is minimally and infinitesimally distance rigid (MIDR) in $\mathbb{R}^d$, then the matrix $R^d(q)R^d(q)^T$ is positive definite.
\end{lemma}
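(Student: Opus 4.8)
The plan is to reduce the statement to the elementary linear-algebra fact that, for a matrix $A$, the Gram-type product $AA^T$ is positive definite exactly when $A$ has full row rank, and then to check that the \emph{minimally and infinitesimally distance rigid} hypothesis forces $R^d(q)$ to have full row rank.

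First I would unpack the two halves of the hypothesis. Infinitesimal distance rigidity together with Lemma \ref{lemd1} gives $\text{rank}(R^d(q)) = dN - d(d+1)/2$. Minimality means the underlying graph carries the smallest edge count compatible with infinitesimal rigidity, i.e. $E = dN - d(d+1)/2$; note that under the standing assumption $N > d$ this number is strictly less than $Nd$, so the $E \times Nd$ matrix $R^d(q)$ can indeed have rank $E$. Combining the two identities yields $\text{rank}(R^d(q)) = E$, that is, $R^d(q)$ has full row rank, equivalently $R^d(q)^T$ is injective.

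Next I would invoke the standard computation: for any $x \in \mathbb{R}^E$ one has $x^T R^d(q) R^d(q)^T x = \| R^d(q)^T x \|^2 \ge 0$, so the symmetric $E \times E$ matrix $R^d(q) R^d(q)^T$ is positive semi-definite; moreover $x^T R^d(q) R^d(q)^T x = 0$ forces $R^d(q)^T x = 0$, and injectivity of $R^d(q)^T$ then gives $x = 0$. Hence $R^d(q) R^d(q)^T$ is positive definite, which is the claim.

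I do not expect a substantive obstacle here; the argument is a short chain of definitions plus one elementary lemma. The only place requiring a little care is the bookkeeping step — verifying that ``minimal'' is to be read precisely as the edge-count equality $E = dN - d(d+1)/2$, so that the rank count from Lemma \ref{lemd1} coincides with $E$ — after which full row rank, and thus positive definiteness of the product, is immediate.
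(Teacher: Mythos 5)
Your proposal is correct. Note that the paper itself gives no proof of this lemma: it is quoted verbatim from the cited reference (Sun et al., \emph{Cooperative coordination and formation control}), so there is no in-paper argument to compare against. Your chain of reasoning — infinitesimal rigidity plus Lemma~\ref{lemd1} gives $\mathrm{rank}(R^d(q)) = dN - d(d+1)/2$, minimality gives $E = dN - d(d+1)/2$, hence $R^d(q)$ has full row rank and the Gram matrix $R^d(q)R^d(q)^T$ is positive definite — is exactly the standard argument used in the rigidity literature, and it is sound. The only point worth flagging is the one you already flag: if ``minimal'' is defined as ``removing any single edge destroys infinitesimal rigidity'' rather than directly as the edge count $E = dN - d(d+1)/2$, you need the one-line extra observation that a linearly dependent row of $R^d(q)$ could be deleted without dropping the rank, contradicting minimality; this forces the rows to be independent and recovers the same edge-count identity, after which your argument goes through unchanged.
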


To make the distance of each edge go to the desired value, the Hamiltonian for formation stabilization is given as $$H^d=\sum_{j=1}^{E}H^d_j=\frac{1}{4}\sum_{j=1}^{E}(e^d_j)^2=\frac{1}{4}\sum_{j=1}^{E}(||z_j||^2-||z_j^*||^2)^2.$$

For the dynamics of the controller, we consider the virtual spring and damping couplings associated with the edges, which are given by
\begin{align}	\label{dcd}
\begin{split}
\dot e^d_j&=\omega^d_j,\\
\tau_j^d&=\frac{\partial H^d_j}{\partial e^d_j}+D^d_j\omega^d_j,
\end{split}
\end{align} 
where $\omega^d_j \in \mathbb{R}$ and $\tau_j^d \in \mathbb{R}$ are the input and output of the controller, respectively, and $D^d_j \in \mathbb{R} \geq 0$ is the dissipation constant. Note that the output of the controller is in distance space $\mathbb{R}$, to transform it to $\mathbb{R}^d$, the mapping Jacobian is given by
$$J^d_j=\frac{\partial e^d_j}{\partial z_j}=z_j$$

Establishing the negative feedback connection of the controller system and the original system, the corresponding control law is derived as $$u^d_j=-J^d_j\tau_j^d=-z_j(e^d_j+D^d_j\dot{e}^d_j).$$

Furthermore, interconnecting the whole network by using the incidence matrix $B$ of the underlying graph, the resulting control law in $\mathbb{R}^{N \times d}$ is proposed as 
\begin{align}	\label{cld}
\begin{split}
u^d=-\underbrace{(B\otimes I_d)({\rm blkdiag} (z_1^T,z_2^T,...,z_E^T))^T}_{{R^d}^T} (e^d+D^d\dot{e}^d) 
\end{split}
\end{align} 

The total Hamiltonian is defined as $H^D=H^v+H^d$. In this case, the closed-loop system is given by

\begin{align}	\label{clsd}
\begin{split}
\matr{\dot{\bar{q}}  \\ \quad \\ \dot{\bar{p}} \\ \quad \\ \dot{e}^d } =& \matr{0 & I_{Nd} & 0 \\ \quad \\-I_{Nd} & -{D}^D & -(B\otimes I_d){\Omega^d}^T \\ \quad \\ 0 & \Omega^d({B}^T\otimes I_d)  & 0}  \\ & \cdot \matr{ \frac{\partial{H^D}}{\partial{\bar{q}}}(\bar{q},\bar{p}) \\ \quad \\ \frac{\partial{H^D}}{\partial{\bar{p}}}(\bar{q},\bar{p}) \\ \quad \\ \frac{\partial{H^D}}{\partial{e^d}}(\bar{q},\bar{p}) },  
\end{split}
\end{align} 
where $\Omega^d={\rm blkdiag}(z_1^T,z_2^T,...,z_E^T)$, $D^D=D^r+D^t+(B\otimes I_d){\Omega^d}^TD^d\Omega^d({B}^T\otimes I_d)$, and $D^d={\rm diag} \{D^d_1, D^d_2, ..., D^d_E\}$.

Furthermore, we obtain the following theorem.

\begin{theorem} \label{theo2}
Consider a group of agents modeled in port-Hamiltonian form as described by \eqref{dyph}. If the desired framework $f=(\mathcal{G}_N(\mathcal{V}_N,\mathcal{E}_E),q^*)$ is IDR, then  using the control law $u^v$ (\ref{uv})+$u^d$ (\ref{cld}), the distance functions with associated edges converge to the desired values locally and asymptotically,  and each agent tracks the desired velocity. 
\end{theorem}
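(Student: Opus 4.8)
The plan is to mimic the structure of the proof of Theorem \ref{theo1}, but to handle the nontrivial kernel of the incidence matrix by invoking the infinitesimal rigidity of the desired framework rather than a pure linear-algebra argument. First I would take $H^D = H^v + H^d$ as a candidate Lyapunov function. Since $H^v$ is positive semi-definite with a minimum attained exactly when $\bar p = 0$ (velocities equal $v^*$), and $H^d = \tfrac14 \sum_j (e^d_j)^2 \ge 0$ with minimum attained exactly when every $e^d_j = 0$ (all edge distances at their desired values), $H^D$ is a valid local Lyapunov candidate around the target equilibrium. Computing $\dot H^D$ along \eqref{clsd}, the skew-symmetric part of the interconnection matrix contributes nothing, leaving
\begin{align*}
\dot H^D = -\frac{\partial^T H^D}{\partial \bar p}\, D^D\, \frac{\partial H^D}{\partial \bar p} \le 0,
\end{align*}
since $D^D = D^r + D^t + (B\otimes I_d){\Omega^d}^T D^d \Omega^d (B^T\otimes I_d) \ge 0$.

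Next I would apply LaSalle's Invariance Principle on a suitable compact sublevel set of $H^D$ contained in a neighborhood of the target equilibrium (this is where the \emph{local} nature of the conclusion enters; one must choose the neighborhood small enough that the sublevel set is compact and invariant). On the largest invariant set contained in $\{\dot H^D = 0\}$ we get $D^t M^{-1}\bar p = 0$, hence (assuming $D^t$ positive definite, or at least invoking the velocity-tracking argument of Proposition 1) $\bar p \equiv 0$, so $\dot{\bar p} \equiv 0$. Substituting $\bar p = 0$, $\dot{\bar p} = 0$ into the second block row of \eqref{clsd} yields
\begin{align*}
(B\otimes I_d){\Omega^d}^T\, \frac{\partial H^D}{\partial e^d} = (B\otimes I_d){\Omega^d}^T e^d = {R^d}^T e^d = 0,
\end{align*}
using $\partial H^D/\partial e^d = \tfrac12 e^d$ (up to the constant, which is absorbed) and the identification ${R^d}^T = (B\otimes I_d){\Omega^d}^T$ from \eqref{cld}.

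The crux is then to pass from ${R^d}^T e^d = 0$ to $e^d = 0$. This is exactly where infinitesimal distance rigidity is used: if the framework were \emph{minimally} IDR, then by Lemma \ref{lemd2} the matrix $R^d {R^d}^T$ is positive definite, so ${R^d}^T e^d = 0 \Rightarrow R^d {R^d}^T e^d = 0 \Rightarrow e^d = 0$, giving the desired formation. For a general (non-minimal) IDR framework the rank condition of Lemma \ref{lemd1} gives $\operatorname{rank}(R^d) = dN - d(d+1)/2 < E$ in general, so $R^d {R^d}^T$ is only positive semi-definite and $e^d = 0$ cannot be concluded from a single algebraic identity. The resolution — and the step I expect to be the main obstacle — is to argue that \emph{near} the desired configuration $q^*$ the constraint manifold $\{q : h^d(q) = h^d(q^*)\}$ and the zero set of $e^d$ coincide locally: infinitesimal rigidity of $f_{q^*}$ implies $q^*$ is a regular point of $h^d$ restricted to a section transverse to the trivial motions, so by the constant-rank / implicit function theorem the equilibrium set of \eqref{clsd} near $(q^*,0,0)$ consists precisely of points with $e^d = 0$ (modulo rotations and translations), and these are isolated modulo the group of trivial motions. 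One then combines this with the LaSalle conclusion to get local convergence of the distances to their desired values; asymptotic local stability follows because $H^D$ has a strict local minimum (modulo trivial motions) at the target, so sublevel sets shrink to it. A careful write-up should make explicit the neighborhood on which the rigidity matrix retains full rank $dN - d(d+1)/2$ and on which $\partial H^D/\partial e^d = 0$ together with ${R^d}^T e^d = 0$ forces $e^d = 0$; handling the factor of $2$ in $\partial H^D/\partial e^d$ and the precise role of $D^t$ versus $D^r, D^f$ in the LaSalle step are routine but should be stated.
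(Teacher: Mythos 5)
Up to the crux, your proposal is the paper's proof verbatim: the same Lyapunov candidate $H^D=H^v+H^d$, the same dissipation identity $\dot H^D=-\frac{\partial^T H^D}{\partial\bar p}D^D\frac{\partial H^D}{\partial\bar p}\le 0$, LaSalle giving $\bar p=0$, $\dot{\bar p}=0$, and the reduction to ${R^d}^T e^d=0$ on the invariant set, followed by Lemma~\ref{lemd2} to dispose of the minimally infinitesimally rigid case. (Your side remarks about needing $D^r+D^t$ positive definite, or an extra invariance argument, to get $\bar p=0$ from $\dot H^D=0$, and about compactness of the sublevel set for LaSalle, are actually more careful than the paper, which glosses both points.) Where you diverge is the non-minimal IDR case. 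The paper resolves it combinatorially: it decomposes the IDR framework into a MIDR sub-framework plus redundant edges, argues via Lemma~\ref{lemd2} that the errors on the MIDR edges vanish, and then invokes redundancy (citing the rigidity literature) to conclude that the remaining distance constraints are automatically met once the MIDR sub-shape is attained locally. You instead propose a local analytic argument near $q^*$ via the constant-rank/implicit function theorem.

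The problem is that this last step, which you yourself flag as ``the main obstacle,'' is asserted rather than proved, and as stated it is close to circular: saying ``the equilibrium set of \eqref{clsd} near $(q^*,0,0)$ consists precisely of points with $e^d=0$'' is exactly the claim in question, and the constant-rank theorem by itself only describes the level sets of $h^d$; it does not rule out critical points of the potential $\tfrac14\|e^d\|^2$ with small $e^d\neq 0$ lying in $\ker{R^d}^T(q)$. The missing ingredient is the nonlinear analogue of the range/kernel orthogonality argument you already used in Theorem~\ref{theo1}: since $e^d(q)=h^d(q)-h^d(q^*)$, it agrees to first order with $R^d(q^*)(q-q^*)$ and hence lies in $\mathcal{R}(R^d(q^*))$ up to second-order terms, while infinitesimal rigidity keeps the smallest nonzero singular value of $R^d$ bounded away from zero in a neighborhood of $q^*$; only by combining these two facts (or by following the paper's MIDR-plus-redundancy decomposition) can you conclude that ${R^d}^T(q)e^d(q)=0$ with $q$ near the target forces $e^d(q)=0$. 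So you have correctly located where rigidity must enter and sketched a viable route, but the decisive implication ${R^d}^Te^d=0\Rightarrow e^d=0$ for non-minimal IDR frameworks is left unproved in your write-up, whereas the paper supplies an explicit (if brief) argument for it.
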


\begin{proof}
Take the Hamiltonian $H^D$ as a Lyapunov candidate. It follows that $H^D\geq0$. Taking the time-derivative of the Hamiltonian, we have $$\dot{H}^D=-\frac{\partial^T H^D}{\partial \bar{p}}D^D \frac{\partial H^D}{\partial \bar{p}}.$$
It follows that $\dot{H}^D \leq 0$. Furthermore, by invoking LaSalle's Invariance Principle, the system converges to the largest invariant set $\{p|\bar{p}=0\}$. It follows that $\dot{\bar{p}}=0$. Substituting $\bar{p}=0, \dot{\bar{p}}=0$ into (\ref{clsd}) gives
\begin{align} \label{botd}
-(B\otimes I_d){\Omega^d}^T \frac{\partial{H^D}}{\partial{e^d}}=-(B\otimes I_d){\Omega^d}^Te^d=-{R^d}^Te^d=0.  
\end{align}

Next, we consider $R^d, e^d$ corresponding to the MIDR framework. By invoking {\it Lemma \ref{lemd2}}, we have that $R^d{R^d}^T$ is positive definite. It follows that ${R^d}^T$ is full column rank, implying $e_d$ corresponding to the MIDR framework is zero on the invariant set.  

The IDR framework can be obtained by adding edges to the MIDR framework, in such a way that the number of the edges, $E$, is larger than the rank, $Nd-d(d+1)/2$. We have proved that the $Nd-d(d+1)/2$ edges making up the MIDR framework converge to the desired values.  Furthermore, an IDR framework can be decomposed into a MIDR sub-framework and an additional sub-framework with the remaining edges. The additional distance constraints imposed by the sub-framework are redundancy \cite{anderson2008rigid}. Therefore, when the $Nd-d(d+1)/2$ edges making up the MIDR framework converge to the desired values, all $E$ edges in an IDR framework converge to the desired values \cite{sun2016exponential}; i.e., $e_d$ corresponding to the IDR framework is zero on the invariant set, thus completing the proof.  
\end{proof}

\subsection{Bearing-based formation}
The definition for the bearing of edge $j$ that associates agents $i$ and $k$ is given by $$s_j=\frac{q_i-q_k}{||q_i-q_k||}.$$ 

In terms of bearing rigidity, the bearing potential function is defined as $$h^b=[s_1^T, s_2^T,...,s_E^T]^T.$$ 

The time-derivative of $h^b$ is given as
\begin{align} \label{brm}
\begin{split}
 \dot{h}^b&=\frac{\partial^T h^b}{\partial q}\dot{q}\\&=  {\rm blkdiag}(P_{s_1}/||z_1||,P_{s_2}/||z_2||,...,P_{s_E}/||z_E||)({B^b}^T \otimes I_d)\dot{q},  
\end{split}
\end{align}
where $P_{s_1}=I_d-{s_1}{s_1^T}$ is the orthogonal projection matrix which projects vectors onto the orthogonal complement of $s_1$, and $R^b=\frac{\partial^T h^b}{\partial q}$ is defined as the bearing rigidity matrix. For more details of bearing rigidity, we refer the readers to \cite{eren2012formation}, \cite{zhao2015bearing}.

To make the bearing of each edge go to the desired value, the Hamiltonian for formation stabilization is given as $$H^b=\sum_{j=1}^{E}H^b_j=\frac{1}{2}\sum_{j=1}^{E}(e^b_j)^Te^b_j=\frac{1}{2}\sum_{j=1}^{E}(s_j-s_j^*)^T(s_j-s_j^*).$$

For the dynamics of the controller, we consider the virtual spring and damping couplings associated with the edges, which are given by
\begin{align}	\label{dcb}
\begin{split}
\dot e^b_j&=\omega^b_j,\\
\tau_j^b&=\frac{\partial H^b_j}{\partial e^b_j}+D^b_j\omega^b_j,
\end{split}
\end{align} 
where $\omega^d_j \in \mathbb{R}^d$ and $\tau_j^d \in \mathbb{R}^d$ are the input and output of the controller, respectively, and $D^b_j \in \mathbb{R}^{d \times d} \geq 0$ is the diagonal dissipation matrix. Note that the output of the controller is in bearing space, to transform it to $\mathbb{R}^d$, the mapping Jacobian is given by
$$J^b_j=\frac{\partial e^b_j}{\partial z_j}=\frac{P_{s_j}}{||z_j||}.$$

Establishing the negative feedback connection of the controller system and original system, the corresponding control law is derived as $$u^b_j=-J^b_j\tau_j^b=-\frac{P_{s_j}}{||z_j||}(e^b_j+D^b_j\dot{e}^b_j).$$

Furthermore, interconnecting the whole network by using the incidence matrix $B$ of the underlying graph, the resulting control law in $\mathbb{R}^{N \times d}$ is proposed as 
\begin{align}	\label{clb}
\begin{split}
u^b=-\underbrace{(B\otimes I_d) {\rm blkdiag}(P_{s_1}/||z_1||,...,P_{s_E}/||z_E||)}_{{R^b}^T}(e^b+D^b\dot{e}^b)
\end{split}
\end{align} 

The total Hamiltonian is defined as $H^B=H^v+H^b$. In this case, the closed-loop system in the compact form is given by

\begin{align}	\label{clsb}
\begin{split}
\matr{\dot{\bar{q}}  \\ \quad \\ \dot{\bar{p}} \\ \quad \\ \dot{e}^b } =& \matr{0 & I_{Nd} & 0 \\ \quad \\-I_{Nd} & -{D}^B & -(B\otimes I_d){\Omega^b}^T \\ \quad \\ 0 & \Omega^b({B}^T\otimes I_d)  & 0} \\ &\cdot  \matr{ \frac{\partial{H^B}}{\partial{\bar{q}}}(\bar{q},\bar{p}) \\ \quad \\  \frac{\partial{H^B}}{\partial{\bar{p}}}(\bar{q},\bar{p}) \\ \quad \\ \frac{\partial{H^B}}{\partial{e^b}}(\bar{q},\bar{p}) }, 
\end{split}
\end{align} 
where $\Omega^b={\rm blkdiag}(P_{s_1}/||z_1||,P_{s_2}/||z_2||,...,P_{s_E}/||z_E||)$, $D^B=D^r+D^t+(B\otimes I_d){\Omega^b}^TD^b\Omega^b({B}^T\otimes I_d)$, and $D^b={\rm blockdiag} \{D^b_1, D^b_2, ..., D^b_E\}$.

Furthermore, according to the results in \cite{zhao2015bearing}, we first present the following Lemma.
\begin{lemma} \label{lembd}
If the framework $f=(\mathcal{G}_N(\mathcal{V}_N,\mathcal{E}_E),q^*)$ is infinitesimally bearing rigid, then the dynamics $\dot{\delta}(t)=(B\otimes I_d)$ ${\rm blkdiag}(P_{s_1},...,P_{s_E})s^*$ has two equilibria $\delta_1=0$ and $\delta_2=-2q^*-(\mathbf{1}\otimes ((\mathbf{1}\otimes I_d)^Tq/N))$, where $\delta_1$ is asymptotically stable and $\delta_2$ is unstable.
\end{lemma}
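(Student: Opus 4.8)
The plan is to adapt the argument of \cite{zhao2015bearing} to the present notation. Write $q$ for the configuration parametrized by $\delta$ (with $\delta=0$ corresponding to the target $q^*$ and $\dot q=\dot\delta$), so that $s_j=s_j(q)$ and $z_j=(B^T\otimes I_d)_j q$ are the current bearing and relative position of edge $j$, and the right-hand side of the displayed dynamics is a genuine function of $\delta$. The first step is to record two quantities that are conserved along the flow (away from the degenerate configurations where some $z_j=0$): the centroid $\tfrac{1}{N}(\mathbf{1}_N\otimes I_d)^T q$, because $\mathbf{1}_N^T B=0$; and the norm $\|q\|$, because $q^T\dot\delta=\big[(B^T\otimes I_d)q\big]^T{\rm blkdiag}(P_{s_1},\dots,P_{s_E})s^*=\sum_j\|z_j\|\,s_j^T P_{s_j}s_j^*=0$ using $s_j^T P_{s_j}=0$. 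Hence every trajectory is confined to the compact set $\mathcal{M}=\{q:\|q\|=\|q^*\|,\ (\mathbf{1}_N\otimes I_d)^Tq=(\mathbf{1}_N\otimes I_d)^Tq^*\}$, which is (topologically) a sphere containing $q^*$ and its reflection through the centroid.

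Second, take $W(\delta)=\|q-q^*\|^2$ as a Lyapunov function on $\mathcal{M}$. Since $q^T\dot\delta=0$, one gets $\dot W=2(q-q^*)^T\dot\delta=-2q^{*T}\dot\delta$, and $q^{*T}\dot\delta=\big[(B^T\otimes I_d)q^*\big]^T{\rm blkdiag}(P_{s_j})s^*=\sum_j\|z_j^*\|\,s_j^{*T}P_{s_j}s_j^*=\sum_j\|z_j^*\|\sin^2\theta_j$, where $\theta_j$ is the angle between $s_j$ and $s_j^*$; thus $\dot W=-2\sum_j\|z_j^*\|\sin^2\theta_j\le 0$, with equality exactly when $s_j=\pm s_j^*$ for every edge $j$. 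By LaSalle's invariance principle every trajectory converges to the largest invariant subset of $\Lambda:=\{q\in\mathcal{M}:s_j=\pm s_j^*\ \forall j\}$.

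Third, I would use infinitesimal bearing rigidity to show $\Lambda=\{\delta_1,\delta_2\}$. On $\Lambda$ each edge obeys $q_i-q_k=\mu_j(q_i^*-q_k^*)$ with $\mu_j=\pm\|z_j\|/\|z_j^*\|\neq 0$, so $(\mathcal{G},q)$ is bearing-equivalent to the infinitesimally bearing rigid framework $(\mathcal{G},q^*)$ and is therefore bearing-congruent to it (\cite{zhao2015bearing}), i.e. $q=\alpha q^*+\mathbf{1}_N\otimes\beta$ for a single $\alpha\neq 0$ and some $\beta\in\mathbb{R}^d$; equivalently, one may propagate the edgewise scalars $\mu_j$ around the cycles of the rigid graph to force all $\mu_j$ to coincide. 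Imposing $q\in\mathcal{M}$ then gives $|\alpha|=1$ and fixes $\beta$, leaving precisely $\delta_1=0$ ($\alpha=1$) and $\delta_2$ as stated ($\alpha=-1$, the reflection of $q^*$ through the centroid, at which every current bearing equals $-s_j^*$).

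Finally, for stability: $W$ attains its global minimum $0$ on $\mathcal{M}$ only at $\delta_1$, so it is a strict Lyapunov function there and $\delta_1$ is asymptotically stable (in fact, since the only equilibria are $\delta_1,\delta_2$ and the latter is unstable, $\delta_1$ attracts $\mathcal{M}$ up to a set of measure zero); and $W=2\|q^*\|^2-2q^Tq^*$ is maximized on $\mathcal{M}$ only at $\delta_2$, so $-W$ is a Chetaev function there and $\delta_2$ is unstable. The step I expect to be the main obstacle is the rigidity argument in the third paragraph: rigorously ruling out mixed-sign configurations in $\Lambda$ and correctly accounting for the translation/scaling freedom, which is exactly the content imported from \cite{zhao2015bearing}; a secondary point requiring care is the sign of $\dot W$, which hinges on recognizing that $\|q\|$ is conserved and on the identity $q^{*T}\dot\delta=\sum_j\|z_j^*\|\sin^2\theta_j$.
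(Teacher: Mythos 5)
The paper never proves this lemma at all—it is stated as an imported result ("according to the results in \cite{zhao2015bearing}"), and your reconstruction (conservation of the centroid via $\mathbf{1}^T B=0$ and of $\|q\|$ via $z_j^TP_{s_j}=0$, the Lyapunov function $\|q-q^*\|^2$ with $\dot W=-2\sum_j\|z_j^*\|\sin^2\theta_j$, the reduction of the invariant set to $\{q^*,\ \text{its reflection}\}$ through global bearing rigidity, and the Chetaev-type argument at the reflected point) is precisely the argument of that reference and is correct. The only caveats are cosmetic: each trajectory preserves its \emph{own} initial centroid and norm, so the two-equilibria and stability claims are relative to the invariant leaf through $q^*$ (which is how the lemma must be read, and how you read it), and the printed expression for $\delta_2$ in the lemma appears to contain typos relative to the centroid-reflection $-2\bigl(q^*-\mathbf{1}\otimes\bar q^*\bigr)$ that your $\alpha=-1$ case correctly produces.
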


We obtain the following theorem on bearing-based formation and velocity tracking.

\begin{theorem} \label{theo3}
Consider a group of agents modeled in port-Hamiltonian form as described by \eqref{dyph}. If the framework $f=(\mathcal{G}_N(\mathcal{V}_N,\mathcal{E}_M),q^*)$ is infinitesimally bearing rigid, then using the control law $u^v$ (\ref{uv})+$u^b$ (\ref{clb}), each agent tracks the desired velocity and the bearings of the edges achieve the asymptotic stability of the desired sets  except for the initial condition where $s(0)=-s^*$. 
\end{theorem}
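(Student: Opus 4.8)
The plan is to run the same Lyapunov/LaSalle argument as in the proof of Theorem~\ref{theo2}, with the distance rigidity matrix replaced by the bearing rigidity matrix, and then to invoke Lemma~\ref{lembd} to characterize which equilibria survive in the invariant set. First I would take $H^B=H^v+H^b$ as the Lyapunov candidate; it is nonnegative, and along (\ref{clsb}) one computes $\dot H^B=-\frac{\partial^T H^B}{\partial\bar p}D^B\frac{\partial H^B}{\partial\bar p}\le 0$ since $D^B=D^r+D^t+(B\otimes I_d){\Omega^b}^TD^b\Omega^b(B^T\otimes I_d)$ is positive semi-definite. By LaSalle's Invariance Principle the trajectory converges to the largest invariant set contained in $\{\bar p=0\}$; on that set $\dot{\bar p}=0$, and because $\frac{\partial H^B}{\partial\bar p}=M^{-1}\bar p$ we also get $\dot{\bar q}=0$ and $\dot e^b=\Omega^b(B^T\otimes I_d)M^{-1}\bar p=0$, so $\bar q$ and $e^b$ are constant there. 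Reading the $\dot{\bar p}$-row of (\ref{clsb}) with $\bar p=0,\ \dot{\bar p}=0$, and using that $H^B$ does not depend on $\bar q$, yields $(B\otimes I_d){\Omega^b}^Te^b={R^b}^Te^b=0$ on the invariant set.

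The crucial step is to pass from ${R^b}^Te^b=0$ to $e^b=0$. In contrast to the distance case, ${R^b}^T$ is not injective, so there is no analogue of Lemma~\ref{lemd2}; instead I would exploit the bearing structure. From the generalized canonical coordinates and the interconnection (\ref{clb}) one has $\frac{d}{dt}(s-s^*-e^b)=R^b\dot q-\dot e^b=0$, using $R^b(\mathbf 1_N\otimes v^*)=0$, so with the natural initialization $e^b(0)=s(q(0))-s^*$ we may identify $e^b=s-s^*$. Then the pointwise identity $P_{s_j}s_j=0$ turns ${R^b}^Te^b=0$ into $(B\otimes I_d)\,\mathrm{blkdiag}(P_{s_1}/\|z_1\|,\dots,P_{s_E}/\|z_E\|)\,s^*=0$, i.e. the limiting configuration $\bar q$ is an equilibrium of the bearing-only dynamics of Lemma~\ref{lembd}, up to the positive diagonal reweighting $\mathrm{diag}(1/\|z_j\|)$, which leaves the equilibrium set unchanged under the standing assumption that neighbouring agents do not collide. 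Since the desired framework is infinitesimally bearing rigid, Lemma~\ref{lembd} gives that this dynamics has only the two equilibria $\delta_1$ (with $s=s^*$, asymptotically stable) and $\delta_2$ (with $s=-s^*$, unstable). Hence on the invariant set either $e^b=0$, i.e. $s=s^*$, or $e^b=-2s^*$, i.e. $s=-s^*$.

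To conclude, observe that on the coordinates $(\bar p,e^b)$ the function $H^B=\frac12\bar p^TM^{-1}\bar p+\frac12\|e^b\|^2$ is positive definite, and since $s^*$ is a unit vector the configurations $s=s^*$ and $s=-s^*$ are separated. Therefore a trajectory starting sufficiently close to the desired bearing formation keeps $H^B$ small, remains in a neighbourhood that excludes $s=-s^*$, and so its $\omega$-limit set can only be the branch $e^b=0$; thus $s\to s^*$ while simultaneously $\bar p\to 0$, i.e. $\dot q_i\to v^*$. This establishes local asymptotic stability of the desired bearing formation together with velocity tracking, with the reflected configuration $s=-s^*$ being the unique remaining invariant equilibrium and unstable, which is precisely the exceptional initial condition in the statement.

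The main obstacle I anticipate is this middle part. On one hand $H^B$ is only positive semi-definite on the full state (it is independent of $\bar q$), so the conclusion must be stated as set stability, modulo the translational and scaling freedom intrinsic to bearings, and LaSalle alone does not separate $s^*$ from $-s^*$ — one must additionally use the smallness of $H^B$ and the gap between the unit vectors $\pm s^*$. On the other hand, rigorously matching ${R^b}^Te^b=0$ with the equilibrium condition of Lemma~\ref{lembd} requires careful handling of the $1/\|z_j\|$ weighting and the exclusion of inter-agent collisions, and justifying the exceptional case $s(0)=-s^*$ relies on the instability of $\delta_2$ from Lemma~\ref{lembd} (in effect a Chetaev-type or linearization argument), rather than on the energy argument alone.
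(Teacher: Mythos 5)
Your proposal is correct and follows essentially the same route as the paper's proof: the Lyapunov candidate $H^B$, LaSalle's principle yielding $\bar p=0$ and ${R^b}^T e^b=0$ on the invariant set, the identity $P_{s_j}s_j=0$ reducing this to the bearing-only equilibrium condition $(B\otimes I_d)\,\mathrm{blkdiag}(P_{s_1},\dots,P_{s_E})\,s^*=0$ of Lemma~\ref{lembd}, and the two equilibria $s=\pm s^*$ with $s=-s^*$ discarded by its instability. Your extra care about the positive $1/\|z_j\|$ reweighting and the identification $e^b=s-s^*$ only makes explicit steps the paper handles informally, so no substantive difference or gap remains.
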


\begin{proof}
Take the Hamiltonian $H^B$ as a Lyapunov candidate. It follows that $H^B\geq0$. Taking the time derivative of the Hamiltonian, we have $$\dot{H}^B=-\frac{\partial^T H^B}{\partial \bar{p}}D^B \frac{\partial H^B}{\partial \bar{p}}.$$ 
It follows that $\dot{H}^B \leq 0$. Furthermore, invoking LaSalle Invariance Principle gives that the system converges to the largest invariant set where $\{p|\bar{p}=0\}$. It follows $\dot{\bar{p}}=0$. Substituting $\bar{p}=0, \dot{\bar{p}}=0$ into (\ref{clsb}) gives
\begin{align} \label{dotb}
-(B\otimes I_d){\Omega^b}^T \frac{\partial{H^B}}{\partial{e^b}}=-(B\otimes I_d){\Omega^b}^Te^b=0.  
\end{align}

Since $P_{s_j}s_j=0$ and $e_j^b=s_j-s_j^*$, it implies $-(B\otimes I_d){D^b}^Ts^*=0$. 

According to Lemma \ref{lembd}, in terms of bearings, $\delta_1$ corresponds to $s=s^*$ and $\delta_2$ corresponds to $s=-s^*$. Since the distance term is always a non-negative scalar, the convergence of the system is not affected by the evolution of the distance term. Therefore, the invariant sets $\{s|-(B\otimes I_d){\rm blkdiag} (P_{s_1},...,P_{s_E})^Ts^*=0\}$ and $\{s|-(B\otimes I_d){\Omega^b}^Ts^*=0\}$ are the same. On this invariant set, there are two elements $s=s^*$ and $s=-s^*$. Hence, in terms of the closed-loop system, $e^b=0$ is an asymptotically stable equilibrium and $e^b=-2s^*$ is an unstable equilibrium. Therefore, the almost global convergence of $e^b=0$ with all initial conditions except $s(0)=-s^*$ can be obtained. This completes the proof.
\end{proof}

\begin{figure} 
\begin{center} 
\label{angle_diagram}
\includegraphics[width=4.5cm]{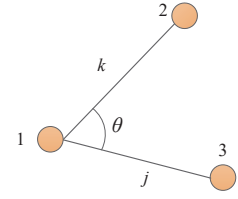}   
\caption{Angle $\theta$} 
\end{center}
\end{figure}

\subsection{Angle-based formation}
Regarding the angle-based formation, we consider the agent dynamics (\ref{dyph}) of dimension 2. Similarly, the controller for the angle-based formation consists of two parts, where the internal feedback is for velocity tracking and the external feedback is for formation stabilization. The same control law as in (\ref{uv}) is used for velocity tracking. Next, we consider formation stabilization. Since the inner constraints of the agents in formation are given by angles and the cosine of the angle is monotone with respect to the angle, we use the cosine of the angle to represent the angle measurement, which can be easily calculated by bearing measurements. For the angle $\theta$ formed by agents $1,2,3$ as shown in Fig. 1, it is given by 
\begin{equation}	
\cos \theta= s_k^Ts_j.
\end{equation} 

The time-evolution of the angle $\theta$ can be derived as:
\begin{equation} \label{costheta}
\begin{split}
\frac{\rm{d}(\cos{\theta})}{{\rm d} t}&=(\frac{P_{s_k}}{||z_k||}\dot{z}_k)^Ts_j+s_k^T(\frac{P_{s_j}}{||z_j||}\dot{z}_j)
              \\&=-(s_j^T\frac{P_{s_k}}{||z_k||}+s_k^T\frac{P_{s_j}}{||z_j||})\dot{q}_1+s_j^T\frac{P_{s_k}}{||z_k||}\dot{q}_2+s_k^T\frac{P_{s_j}}{||z_j||}\dot{q}_3.
\end{split}
\end{equation}

Furthermore, we give the following definitions
\begin{equation} \label{l1}
\begin{split}
&L_{\theta 1}:=-(s_j^T\frac{P_{s_k}}{||z_k||}+s_k^T\frac{P_{s_j}}{||z_j||}),\\
&L_{\theta 2}:=s_j^T\frac{P_{s_k}}{||z_k||},\\
&L_{\theta 3}:=s_k^T\frac{P_{s_j}}{||z_j||}.
\end{split}
\end{equation}
Therefore, (\ref{costheta}) can be rewrite as 
\begin{equation} 
\begin{split}
\frac{\rm{d}(\cos{\theta})}{{\rm d} t}=L_{\theta 1}\dot{q}_1+L_{\theta 2}\dot{q}_2+L_{\theta 3}\dot{q}_3.
\end{split}
\end{equation}

We consider a particular class of underlying graphs $\mathcal{G}_N(\mathcal{V}_N,\mathcal{E})$ where the framework $f=(\mathcal{G}_N,q)$ is angle rigid. For more details about angle rigidity, one may refer to \cite{jing2019angle}, \cite{chen2020angle}, and \cite{chen2017global}.

In terms of the angle rigidity, the angle potential function is defined as $$h^a=[\cos{\theta_1}, \cos{\theta_2}, ..., \cos{\phi_M}]^T.$$ 
Note that $h^a: \mathbb{R}^{2N} \rightarrow \mathbb{R}^{M}$. The time-derivative of $h^a$ is given as
\begin{align} \label{ar}
 \dot{h}^a=\frac{\partial^T h^a}{\partial q}\dot{q}.
\end{align}
$\frac{\partial^T h^a}{\partial q}$ is defined as angle rigidity matrix $R^a$, which can be given by angle Jacobian of each angle $\theta_l, l \in \{1, 2, ..., M\}$ with respect to the three related agents. For example, let $\theta_l$ be formed by agents $n_1, n_2, n_3$, then the expression of the angle rigidity matrix $R^a$ can be shown as follows:
\begin{equation} \label{arm}
\begin{split}
\bordermatrix{
         & 1 &\cdots & n_1 &\cdots & n_2 &\cdots & n_3 &\cdots & N \cr 
\theta_1 &\cdots &\cdots &\cdots &\cdots &\cdots &\cdots &\cdots &\cdots &\cdots \cr
\theta_2 &\cdots &\cdots &\cdots &\cdots &\cdots &\cdots &\cdots &\cdots &\cdots \cr
\cdots   &\cdots &\cdots &\cdots &\cdots &\cdots &\cdots &\cdots &\cdots &\cdots \cr
\theta_l &\cdots &\cdots &L_{\theta_l n_1} &\cdots &L_{\theta_l n_2} &\cdots &L_{\theta_l n_3} &\cdots &\cdots \cr
\cdots   &\cdots &\cdots &\cdots &\cdots &\cdots &\cdots &\cdots &\cdots &\cdots \cr
\theta_M   &\cdots &\cdots &\cdots &\cdots &\cdots &\cdots &\cdots &\cdots &\cdots
}.
 \end{split}
\end{equation}
Regarding the property of $R^a$, we give the following lemma.

\begin{lemma} \label{lem4}
 If a framework $f=(\mathcal{G}_N(\mathcal{V}_N,\mathcal{E}_E),q)$ is minimally and infinitesimally angle rigid (MIAR) in $\mathbb{R}^d$, then the corresponding angle rigidity matrix $R^a(q)R^a(q)^T$ is positive definite.
\end{lemma}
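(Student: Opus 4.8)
The plan is to follow exactly the template used for the distance case (Lemma \ref{lemd2}): reduce the positive definiteness of $R^a(q)R^a(q)^T$ to the statement that $R^a(q)$ has full row rank, and then invoke the elementary linear-algebra fact that $AA^T$ is positive definite whenever the rows of $A$ are linearly independent.

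First I would recall the algebraic meaning of infinitesimal angle rigidity. Because angles are preserved by the full similarity group of $\mathbb{R}^d$ — translations, rotations, and uniform scalings — the space of trivial infinitesimal motions is a subspace of $\mathbb{R}^{Nd}$ of dimension $d+\tfrac{d(d-1)}{2}+1$ (which is $4$ in the plane, the relevant case here), and this subspace is always contained in $\ker R^a(q)$ by construction of the angle rigidity matrix in \eqref{arm}. Infinitesimal angle rigidity is precisely the requirement that $\ker R^a(q)$ equals this trivial subspace, equivalently $\operatorname{rank} R^a(q)=Nd-\big(d+\tfrac{d(d-1)}{2}+1\big)$.

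Second I would use minimality. By definition, a MIAR framework is infinitesimally angle rigid and remains so under no proper subset of its angle constraints, so deleting any single angle strictly decreases the rank of the rigidity matrix. If the $M$ rows of $R^a(q)$ were linearly dependent, then one row would lie in the span of the others; removing the corresponding angle would leave $\operatorname{rank} R^a(q)$ — hence $\ker R^a(q)$, hence infinitesimal rigidity — unchanged, contradicting minimality. Therefore the rows of $R^a(q)$ are linearly independent, so $M=\operatorname{rank}R^a(q)$ and $R^a(q)\in\mathbb{R}^{M\times Nd}$ has full row rank. Consequently $R^a(q)^T$ has trivial kernel, and for every nonzero $x\in\mathbb{R}^M$ we get $x^T R^a(q)R^a(q)^T x=\|R^a(q)^T x\|^2>0$, i.e. $R^a(q)R^a(q)^T$ is positive definite.

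The step I expect to be the main obstacle is the second one: making precise the notion of "minimal" so that it cleanly delivers full row rank of $R^a(q)$. In the rigidity literature minimality is frequently phrased in terms of (generic) angle rigidity rather than infinitesimal angle rigidity, so I would either adopt a definition of MIAR that directly builds in the rank-drop-under-edge-deletion property, or cite the standard equivalence — as is done for distance rigidity — that at a generic configuration a minimally rigid framework has exactly as many constraints as the rank of its rigidity matrix. The remaining ingredients (the dimension count of the trivial motions and the $AA^T\succ0$ argument) are routine.
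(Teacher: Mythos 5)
Your argument is correct and supplies exactly the reasoning the paper leaves implicit: the paper merely asserts that the lemma is ``obvious from the definition of $R^a$,'' and the natural justification is the one you give --- minimality forces the rows of $R^a(q)$ to be linearly independent (a dependent row could be deleted without changing the row space, hence without changing $\ker R^a$ or destroying infinitesimal angle rigidity), so $R^a(q)$ has full row rank and $x^TR^a(q)R^a(q)^Tx=\|R^a(q)^Tx\|^2>0$ for all $x\neq 0$. Your caveat about whether ``minimal'' is phrased with respect to angle rigidity or infinitesimal angle rigidity is well taken but harmless, since infinitesimal angle rigidity implies angle rigidity, so either reading of minimality yields the desired contradiction.
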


A framework is infinitesimally angle rigid if all its continuous infinitesimally motions are trivial in terms of angle constraints. A framework $f$ is minimally angle rigid if its angle rigidity cannot be guaranteed when removing one angle constraint. The proof of Lemma \ref{lem4} is obvious from the definition of $R^a$.

In the angle-based formation, the Hamiltonian for formation stabilization is given as $$H^a=\sum_{l=1}^{M}H^a_l=\frac{1}{2}\sum_{l=1}^{M}{e^a_{\theta_l}}^2,$$
where $e^a_{\theta_l}=\cos{\theta_l}-\cos{\theta_l}^*$.

For the dynamics of the controllers of agents $n_1, n_2, n_3$ forming the triangle $l$, we consider the virtual spring and damping couplings associated with the angles. The dynamics are given by
which are given by
\begin{align}	\label{dca}
\begin{split}
\dot e^a_{\theta_l}&=\omega^a_{\theta_l},\\
\tau_{\theta_l}^a&=\frac{\partial H^a_{l}}{\partial e^a_{\theta_l}}+D^a_{\theta_l}\omega^a_{\theta_l},
\end{split}
\end{align}
where $\omega^a_{\theta_l} \in \mathbb{R}$ and $\tau_{\theta_l}^a \in \mathbb{R}$ are the input and output of the controller, respectively, and $D^a_{l\theta} \in \mathbb{R} \geq 0$ is the dissipation constant. Note that the output of the controller is in angle space, to transform it to $\mathbb{R}^2$, the mapping Jacobian is given by
$$J^a_{\theta_l}=\frac{\partial e^a_{\theta_l}}{\partial q}=\matr{L^T_{\theta_l n_1} & L^T_{\theta_l n_2} & L^T_{\theta_l n_3}} \in \mathbb{R}^{2 \times 3},$$
which actually corresponds to the $l$th row of the angle rigidity matrix.

Establishing the negative feedback connection of the controller system and original system and interconnecting agents $n_1, n_2, n_3$ by using the mapping Jacobian, the resulting control law is given by
\begin{align}	\label{caijk}
\begin{split}
u^a_{n_1}=-L^T_{\theta_l n_1}(e_{\theta_l}+d^a_{l\theta}\dot e_{\theta_l}), \\
u^a_{n_2}=-L^T_{\theta_l n_2} (e_{\theta_l}+d^a_{l\theta}\dot e_{\theta_l}),\\
u^a_{n_3}=-L^T_{\theta_l n_3} (e_{\theta_l}+d^a_{l\theta}\dot e_{\theta_l}).
\end{split}
\end{align} 

Considering a angle angle rigid framework $f=(\mathcal{G}_N(\mathcal{V}_N,\mathcal{E}_E),q)$ , the overall control law is derived as 
\begin{equation} \label{cla}
\begin{split}
u^a_n &= \sum_{l \in \mathcal{N}_n}({L}^T_{\theta_l n}(e_{\theta_l}++d^a_{\theta_l}\dot e_{\theta_l})), \quad n \in \{1,2,...,N\},
\end{split}
\end{equation}
where the set $\mathcal{N}_n$ contains all the angle constraints $l$ that involve agent $n$.

The total Hamiltonian is defined as $H^A=H^v+H^a$. In this case, the closed-loop system in the compact form is given by
\begin{align}	\label{clsa}
\begin{split}
\matr{\dot{\bar{q}}  \\ \quad \\ \dot{\bar{p}} \\ \quad \\ \dot{e}^a } =& \matr{0 & I_{2N} & 0 \\ \quad \\-I_{2N} & -{D}^A & -{R^a}^T \\ \quad \\ 0 & R^a  & 0} \matr{ \frac{\partial{H^A}}{\partial{\bar{q}}}(\bar{q},\bar{p}) \\ \quad \\  \frac{\partial{H^A}}{\partial{\bar{p}}}(\bar{q},\bar{p}) \\ \quad \\ \frac{\partial{H^A}}{\partial{e^a}}(\bar{q},\bar{p}) },
\end{split}
\end{align} 
where $e^a= \matr{e_{\theta_1} & e_{\theta_2} & ... & e_{\phi_M}}^T \in \mathbb{R}^{2M}$, and  $D^A=D^r+D^t+{R^a}^TD^aR^a$, $D^a={\rm diag} \{D^a_{\theta_1}, D^a_{\theta_2}, ..., D^a_{\theta_M}\}$.

The main result is given by the following theorem.
\begin{theorem} \label{theo4}
Consider a network of the agents modeled as in (\ref{dyph}) and connected by a triangulated Laman graph $\mathcal{G}_n(\mathcal{V}_n,\mathcal{E})$ with $M$ triangles. Moreover, assume that any three agents forming a triangle are not collinear and no agents are coincident at the initial time. If the framework $f=(\mathcal{G}_N(\mathcal{V}_N,\mathcal{E}_M),q^*)$ is infinitesimally angle rigid, then using the control law $u^v$ (\ref{uv})+$u^a_n$ (\ref{cla}), the group of agents locally and asymptotically converge to the desired formation constrained by the angle constraints, and each agent tracks the desired velocity.  
\end{theorem}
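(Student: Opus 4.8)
The plan is to reuse the energy-plus-LaSalle template of Theorems \ref{theo2} and \ref{theo3}, now feeding in the rigidity information through Lemma \ref{lem4}. First I would take the total Hamiltonian $H^A=H^v+H^a$ as a Lyapunov candidate on the extended state $(\bar q,\bar p,e^a)$. It is nonnegative: $H^v=\tfrac12\sum_i(\dot q_i-v^*)^TM_i(\dot q_i-v^*)\ge 0$ with minimum at $\dot q_i=v^*$, and $H^a=\tfrac12\sum_l (e^a_{\theta_l})^2\ge0$ with minimum where $\cos\theta_l=\cos\theta_l^*$. Using the skew-symmetry of the interconnection matrix in the closed loop \eqref{clsa}, the cross terms built from $R^a$ cancel, so that $\dot H^A=-\tfrac{\partial^T H^A}{\partial\bar p}D^A\tfrac{\partial H^A}{\partial\bar p}\le 0$, since $D^A=D^r+D^t+{R^a}^TD^aR^a\ge 0$. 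This part is routine port-Hamiltonian bookkeeping and I would not dwell on it.

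Next I would invoke LaSalle's invariance principle: trajectories converge to the largest invariant set contained in $\{\bar p=0\}$, so on that set $\dot{\bar p}=0$. Substituting $\bar p=0$, $\dot{\bar p}=0$ into the second block row of \eqref{clsa} leaves $-{R^a}^T\tfrac{\partial H^A}{\partial e^a}=-{R^a}^Te^a=0$. Moreover $\bar p=0$ already gives velocity tracking, because $\bar p_i=p_i-M_iv^*=M_i(\dot q_i-v^*)$, so $\dot q_i\to v^*$ for every $i$.

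The crux is to pass from ${R^a}^Te^a=0$ to $e^a=0$. Since the graph is a triangulated Laman graph with $M$ triangles and $f=(\mathcal G_N,q^*)$ is infinitesimally angle rigid, the framework is in fact minimally and infinitesimally angle rigid ($R^a$ has exactly $M$ rows and full row rank), so Lemma \ref{lem4} gives that $R^a(q^*){R^a(q^*)}^T$ is positive definite; equivalently ${R^a}^T$ has full column rank, and by continuity this persists on a neighbourhood of $q^*$, using the standing non-collinearity and non-coincidence hypotheses so that the entries $L_{\theta_l n}$ in \eqref{l1}--\eqref{arm} are well defined and continuous there. Restricting the initial condition to such a neighbourhood, where the nonincreasing $H^A$ keeps the trajectory, forces $e^a=0$ on the invariant set, i.e. $\cos\theta_l=\cos\theta_l^*$ for all $l$; monotonicity of cosine on $(0,\pi)$ gives $\theta_l=\theta_l^*$, and infinitesimal angle rigidity identifies this with the desired formation shape (up to the trivial transformations). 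Local asymptotic stability then follows from the strict decrease of $H^A$ off the invariant set together with this rank condition. If one does not wish to assume minimality outright, the same conclusion follows by the decomposition argument of Theorem \ref{theo2}: extract a minimally and infinitesimally angle rigid sub-framework, apply Lemma \ref{lem4} to it to kill the corresponding components of $e^a$, and note the remaining angle constraints are redundant and hence automatically met.

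I expect the main obstacle to be making the locality rigorous: one must guarantee that a trajectory started near $q^*$ stays in the region where (i) $R^a$ keeps full row rank, (ii) every triangle stays non-degenerate so that $L_{\theta_l n}$ and the control law \eqref{cla} are well defined, and (iii) $H^a$ remains a genuine positive-definite penalty in $e^a$. Unlike the displacement-based case of Theorem \ref{theo1}, there is no global incidence-matrix identity to exploit, so an invariance estimate for this neighbourhood plus continuity of the rank condition carry the argument, and stating the size of the basin precisely is the delicate point.
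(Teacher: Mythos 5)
Your proposal is correct and follows essentially the same route as the paper: take $H^A=H^v+H^a$ as the Lyapunov function, show $\dot H^A\le 0$ via the closed-loop pH structure, apply LaSalle to reach $\bar p=0$ and ${R^a}^T e^a=0$, and then conclude $e^a=0$ through Lemma \ref{lem4} on a minimally and infinitesimally angle rigid (sub)framework with the remaining angle constraints treated as redundant, which is exactly the paper's decomposition argument. Your added remarks on the locality of the rank condition and triangle non-degeneracy go slightly beyond what the paper makes explicit, but they do not change the argument; just be cautious that the claim that a triangulated Laman graph with $M$ triangles is automatically \emph{minimally} angle rigid is not needed (and not asserted by the paper), since your fallback decomposition already covers the general infinitesimally angle rigid case.
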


\begin{proof}
Take the following Hamiltonian as a candidate Lyapunov function 
\begin{equation} \label{aH}
\begin{split}
H^A=&\frac{1}{2}\sum_{n=1}^N \frac{1}{m_n}\bar{p}_n^T\bar{p}_n
+ \frac{1}{2}\sum_{l=1}^Me_{\theta_l}^2.
\end{split}
\end{equation}
It follows that $H^A$ is positive definite. Now we consider the time derivative of (\ref{aH}). For simplicity, we omit the process and only give the results as follows
\begin{equation} \label{daH}
\begin{split}
\dot{H}^A=&-\frac{\partial^T H^A}{\partial \bar{p}}(D^r+D^t)\frac{\partial H^A}{\partial \bar{p}} -\sum_{l=1}^{M}D_{\theta_l}\dot{e}_{\theta_l}^2  \leq0.
\end{split}
\end{equation}
By invoking LaSalle's invariance principle, we get that the trajectories of the closed-loop system converge to the largest invariant set where $\{(p, e_{\theta})|\bar{p}=0, \dot{e}_{\theta}=0\}$. On this set $\bar{p}_n=0$ for all $n \in \{1, 2, ... ,N\}$ and $\dot{e}_{\theta_l}=0$ for all $l \in \{1,2,...,M\}$.

Furthermore, we derive that $\dot{p}_n=0$ for all $n \in \{1, 2, ... ,N\}$ due to $\bar{p}_n=0$. Substituting $\dot{p}_n=0$ into second row of (\ref{clsa}) it follows that 
\begin{equation} \label{Uneq0}
u^a_n=0, \quad n \in \{1, 2, ... ,N\}
\end{equation}
on this invariant set. Substituting $\dot{e}_{\theta_l}=0$ for all $l \in \{1,2,...,M\}$ into (\ref{Uneq0}) and writing the result in compact form, we get
\begin{equation} 
\begin{split}
0 = {R^a}^T e_{\theta}.
\end{split}
\end{equation}
Next, we first consider $R^a, e_{\theta}$ corresponding to the MIAR framework. According to $\textit{Lemma \ref{lem4}}$, ${R^a}^T$ is full column rank. Therefore, $e_{\theta}$ corresponding to the MIAR framework is zero on the invariant set. The angle rigid framework can be obtained by adding more angle constraints to the MIDR framework. The two frameworks define the same angle-based formation, i.e., the adding angle constraints in the angle rigid framework are redundancy.
Therefore, when the angle constraints in the MIAR framework converge to the desired values, all the angle constraints in the angle rigid framework converge to the desired values as well, i.e., $e_{\theta}$ corresponding to the angle rigid framework is zero on the invariant set, thus completing the proof.  
\end{proof}

\subsection{Formation control with heterogeneous constraints}
In engineering practice, to facilitate completing a certain task, the geometric shape of a formation may be defined by the combinations of different variables, such as displacements, distances, bearings, and angles. According to different formation constraint variables, the formation can achieve different maneuvering motions while respecting shape constraints. For example,  the distance-based formation is invariant to translation and rotation, the bearing-based formation is invariant to translation and scaling, and the angle-based formation is invariant to translation, rotation, and scaling. In this section, we consider all the constraint variables together, while the resulting formation is only invariant to translation, i.e., tracking the desired trajectory. 

We consider a group of agents in $\mathbb{R}^2$ modeled as in (\ref{dyph}) with heterogeneous constraints to achieve formation tracking. We assume the underlying graph is a triangulated graph. The geometric shape of the formation is given by combinations of displacements, distances, bearings, and angles. Furthermore, we assume the desired framework $f=(\mathcal{G}_N(\mathcal{V}_N,\mathcal{E}_E),q)$ is completely infinitesimally rigid.  
\begin{definition}
Consider a group of $N$ agents in $\mathbb{R}^2$ with an underlying topology described by a graph $\mathcal{G}_N$. The constraints of the desired formation are defined by combinations of displacements, distances, bearings, and angles. A framework $f=(\mathcal{G}_N(\mathcal{V}_N,\mathcal{E}_E),q)$is completely infinitesimally rigid if all its continuous infinitesimally motions are trivial in terms of combinatorial constraints. 
\end{definition}

\begin{figure}
\begin{center}
\label{heterogeneous_diagram}
\includegraphics[width=3.0cm]{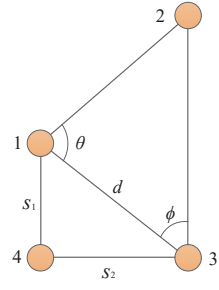}  
\caption{A completely infinitesimally rigid framework with angle constraints $\theta, \phi$, bearing constraint $d$, and distance constraints $s_1, s_2$.} 
\end{center}
\end{figure}

To stabilize all formation constraint variables so that they converge to the desired value, we first give the definition of errors as $$e^h:=[{e^z}^T {e^d}^T {e^b}^T {e^a}^T]^T,$$ where the elements of $e^z, e^d, e^b, e^a$ are defined as same as the foregoing sections. $e^z$ is related to all the displacement constraints, $e^d$ is related to all the distance constraints, $e^b$ is related to all the bearing constraints, and $e^a$ is related to all the angle constraints.
The Hamiltonian for formation stabilization is given as $$H^h=\frac{1}{2}\sum_{\mathcal{S}_z}{e^z}^Te^z+\frac{1}{4}\sum_{\mathcal{S}_d}{e^d}^2+\frac{1}{2}\sum_{\mathcal{S}_b}{e^b}^Te^b+\frac{1}{2}\sum_{\mathcal{S}_a}{e^a}^2, $$
where $\sum_{\mathcal{S}_z}$ is the set of displacement constraints, $\sum_{\mathcal{S}_d}$ is the set of distance constraints, $\sum_{\mathcal{S}_b}$ is the set of bearing constraints, and $\sum_{\mathcal{S}_a}$ is the set of angle constraints.

For the dynamics of the controller, it is proposed by assigning the virtual spring and damping couplings along each constraint to eliminate errors. Establishing the negative feedback connection of the controller system and original system and interconnecting agents by using the incidence matrices and mapping Jacobians, the resulting control law in compact form is derived as
\begin{equation} \label{clh}
\begin{split}
u^h&=-(B^z\otimes I_2)(e^z+D^z\dot{e}_z)-(B^d\otimes I_2){\Omega^d}^T(e^d+D^d\dot{e}^d)\\& \quad -(B^b\otimes I_2){\Omega^b}^T(e^b+D^b\dot{e}^b)-L^a(e^a+D^a\dot{e}^a),
\end{split}
\end{equation}
where $B^z$ consists of the columns of $B$ corresponding to the edges on which there exist the displacement constraints, $B^d$ consists of the columns of $B$ corresponding to the edges on which there exist the distance constraints, $B^b$ consists of the columns of $B$ corresponding to the edges on which there exist the bearing constraints. $L^a$ consists of the rows of $R^a$ corresponding to the angles with constraints. 

The total Hamiltonian with velocity tracking included is given as $H^H=H^v+H^h$. Furthermore, the closed-loop system is given by
\\
\begin{strip}
\begin{align}	\label{clshm}
\begin{split}
\matr{\dot{\bar{q}}  \\ \quad \\ \dot{\bar{p}} \\ \quad \\ \dot{e}^z \\ \quad \\ \dot{e}^d \\ \quad \\ \dot{e}^b \\ \quad \\ \dot{e}^a} =& \matr{0 & I_{2N} & 0 & 0 & 0 & 0\\ \quad \\-I_{2N} & -{D} & -(B^z\otimes I_2) & -(B^d\otimes I_2){\Omega^d}^T & -(B^b\otimes I_2){\Omega^b}^T & -{L^a}^T\\ \quad \\ 0 &{B^z}^T\otimes I_2  & 0 & 0 & 0 & 0\\ \quad \\ 0 & \Omega^d({B^d}^T\otimes I_2)  & 0 & 0 & 0 & 0\\ \quad \\ 0 & \Omega^b({B^b}^T\otimes I_2)  & 0 & 0 & 0 & 0\\ \quad \\ 0 & {L^a}  & 0 & 0 & 0 & 0 }  \matr{ \frac{\partial{H^H}}{\partial{\bar{q}}}(\bar{q},\bar{p}) \\ \quad \\  \frac{\partial{H^H}}{\partial{\bar{p}}}(\bar{q},\bar{p}) \\ \quad \\ \frac{\partial{H^H}}{\partial{{e}^z}}(\bar{q},\bar{p}) \\ \quad \\ \frac{\partial{H^H}}{\partial{e}^d}(\bar{q},\bar{p}) \\\quad \\ \frac{\partial{H^H}}{\partial{e}^b}(\bar{q},\bar{p})\\\quad \\ \frac{\partial{H^H}}{\partial{e}^a}(\bar{q},\bar{p}) },
\end{split}
\end{align} 
\end{strip}

where $D=D^r+D^t+(B^z\otimes I_2)D^f({B^z}^T\otimes I_2)+(B^2\otimes I_2){\Omega^d}^TD^d$ $\Omega^d ({B^d}^T \otimes I_2)+(B^b\otimes I_2){\Omega^b}^TD^b\Omega^b({B^b}^T\otimes I_2)+{L^a}^TD^aL^a.$

To obtain the theorem, we first give the following assumption.
\begin{assumption}
Considering the desired completely infinitesimally rigid framework $f=(\mathcal{G}^t_N(\mathcal{V}_N,\mathcal{E}_E),q^*)$ with a triangulated graph $\mathcal{G}^t_N$, the following matrix associated with heterogeneous constraints of the framework, $$[{B^z}^T\otimes I_2 \quad (B^d\otimes I_2){\Omega^d}^T \quad (B^b\otimes I_2){\Omega^b}^T \quad {L^a}^T],$$ has full column  rank.
\end{assumption}

\begin{remark}
We note that Assumption 1 is not restrictive, as any desired formation can be designed by a completely infinitesimally rigid framework with a triangulated graph. For a single triangular graph with different constraints, it is not difficult to prove that Assumption 1 holds. For the framework with general triangulated graphs, we give an example in Fig. 2 satisfying Assumption 1.
\end{remark}

We obtain the following theorem on rigid formation stabilization with heterogeneous constraints and velocity tracking.

\begin{theorem} \label{theo5}
Consider a group of agents modeled in port-Hamiltonian form as described by \eqref{dyph}. If the desired framework $f=(\mathcal{G}^t_N(\mathcal{V}_N,\mathcal{E}_E),q^*)$ with a triangulated graph $\mathcal{G}^t_N$ is completely infinitesimally rigid, then using the control law $u^v$ (\ref{uv})+$u^h$ (\ref{clh}), the group of the agents converges to the desired formation defined by different constraints locally and asymptotically, and each agent tracks the desired velocity. 
\end{theorem}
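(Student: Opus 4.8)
The plan is to mirror the structure of the proofs of Theorems \ref{theo2}, \ref{theo3}, and \ref{theo4}, using the total Hamiltonian $H^H = H^v + H^h$ as a Lyapunov candidate. First I would note that $H^H \geq 0$ since it is a sum of squared error terms and the velocity-tracking Hamiltonian which is minimized at the desired velocity. Taking the time-derivative along the closed-loop dynamics \eqref{clshm}, the skew-symmetric part of the structure matrix cancels and one is left with $\dot{H}^H = -\frac{\partial^T H^H}{\partial \bar{p}} D \frac{\partial H^H}{\partial \bar{p}} \leq 0$, where $D$ is the aggregate positive semi-definite dissipation matrix defined after \eqref{clshm}. (One should note here that the damping terms $D^z, D^d, D^b, D^a$ on the controller edges also contribute dissipation, as in the proof of Theorem \ref{theo4}; I would write $\dot H^H$ in the form that makes all dissipative contributions explicit.)

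Next I would invoke LaSalle's Invariance Principle: trajectories converge to the largest invariant set on which $\dot{H}^H = 0$, which forces $\bar p = 0$ (hence $\dot q$ tracks $v^*$, giving velocity tracking) and, accounting for the controller damping, $\dot{e}^z = \dot{e}^d = \dot{e}^b = \dot{e}^a = 0$. On this set $\dot{\bar p} = 0$, and substituting into the second block-row of \eqref{clshm} yields
\begin{equation*}
(B^z\otimes I_2)e^z + (B^d\otimes I_2){\Omega^d}^T e^d + (B^b\otimes I_2){\Omega^b}^T e^b + {L^a}^T e^a = 0,
\end{equation*}
which can be written compactly as $M_h\, e^h = 0$, where $M_h = [\,B^z\otimes I_2 \;\; (B^d\otimes I_2){\Omega^d}^T \;\; (B^b\otimes I_2){\Omega^b}^T \;\; {L^a}^T\,]$ is exactly the matrix appearing in Assumption 1. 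As in the distance- and angle-based cases, one must be careful about the bearing block: since $P_{s_j}s_j = 0$, the term ${\Omega^b}^T e^b$ equals ${\Omega^b}^T s^*$ (the $s$ part drops out), so the relevant statement is about the stacked error restricted to a neighborhood of the desired framework; this is why the conclusion is local and asymptotic rather than global.

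Then, by Assumption 1, $M_h$ has full column rank at the desired configuration $q^*$, hence by continuity in a neighborhood of $q^*$, so $M_h e^h = 0$ implies $e^h = 0$; that is, all displacement, distance, bearing, and angle errors vanish on the invariant set. Combined with $\bar p = 0$, this shows local asymptotic convergence to the desired formation with each agent tracking $v^*$. Finally, as in Theorem \ref{theo4}, I would remark that if the graph carries redundant constraints beyond a minimal completely infinitesimally rigid sub-framework, the argument applied to the minimal sub-framework already pins down the configuration, and the redundant constraints are automatically satisfied since they are consistent with the same formation.

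The main obstacle I anticipate is handling the bearing block rigorously in the LaSalle step: unlike the pure distance or pure angle cases, one cannot directly conclude $e^b = 0$ from $M_h e^h = 0$ without using the substitution $P_{s_j}e^b_j = P_{s_j}s_j - P_{s_j}s^*_j = -P_{s_j}s^*_j$ and then invoking a bearing-rigidity / local-stability argument (analogous to Lemma \ref{lembd} and the exclusion of the antipodal configuration $s(0) = -s^*$) to rule out spurious equilibria. Reconciling this with the full-column-rank hypothesis of Assumption 1 — i.e., arguing that Assumption 1 together with locality (a neighborhood of $q^*$ on which the bearing directions stay close to $s^*$, hence away from $-s^*$) is exactly what is needed — is the delicate point, and I would state explicitly that the "local" qualifier in the theorem absorbs precisely the antipodal-bearing exception noted in Theorem \ref{theo3}.
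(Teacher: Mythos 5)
Your proposal follows essentially the same route as the paper's proof: take $H^H$ as the Lyapunov candidate, obtain $\dot H^H \le 0$, invoke LaSalle to force $\bar p = 0$ and $\dot{\bar p} = 0$, substitute into the second block row of \eqref{clshm} to get the stacked equation $M_h e^h = 0$, and conclude $e^h = 0$ from the full-column-rank condition of Assumption~1. The extra care you take (continuity of the rank condition in a neighborhood of $q^*$, and the bearing degeneracy $P_{s_j}s_j=0$ that the "local" qualifier must absorb) goes beyond the paper's terser argument, which simply invokes Assumption~1 directly, but these are refinements of the same approach rather than a different proof.
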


\begin{proof}
Take the Hamiltonian $H^H$ as a Lyapunov candidate. It follows that $H^H \ge 0$. Taking the time derivative of the Hamiltonian, we have
$$\dot{H}^H=-\frac{\partial^T H^H}{\partial \bar{p}}D \frac{\partial H^H}{\partial \bar{p}}.$$ 
It follows that $\dot{H}^H \leq 0$. Furthermore, invoking LaSalle Invariance Principle gives that the system converges to the largest invariant set where $\bar{p}=0$. It follows $\dot{\bar{p}}=0$. Substituting $\bar{p}=0, \dot{\bar{p}}=0$ into (\ref{clshm}) gives
\begin{align} \label{doth}
\begin{split}
-({B^z}^T\otimes I_2)e^z - (B^d\otimes I_2){\Omega^d}^Te^d \\ - (B^b\otimes I_2){\Omega^b}^Te^b-{R^a}^Te^a=0
\end{split}
\end{align}
By invoking Assumption 1, the errors $$e^h=[{e^z}^T {e^d}^T {e^b}^T {e^a}^T]^T$$ converge to 0, thus completing
the proof.
\end{proof}

\section{Simulation Examples}
In order to illustrate the proposed framework, we consider a group of four agents modeled as in \eqref{dyph}, which is a fully actuated system. The corresponding parameters in the model dynamics are given as follows: $m_i=1, i=1,2,3,4.$ $D^r_i={\rm diag}(0,0), i=1,2,3,4.$ The simulation conditions are set as follows.

\textit{(1) Displacement-based formation.} We consider four agents  interconnected by a line (acyclic) graph and a (cyclic) ring graph, respectively. The network diagrams of displacement-based formations are shown in Fig. 3. Accordingly, the incidence matrices $B_1$ and $B_2$ take the following forms 
$$B_1=\matr{-1 & 0 & 0 & 1\\1 & -1 & 0 & 0 \\ 0 & 1 & -1 & 0 \\ 0 & 0 & 1 & -1}, B_2=\matr{-1 & 0 & 0 \\1 & -1 & 0 \\ 0 & 1 & -1 \\ 0 & 0 & 1 }.$$

The initial positions and velocities of the four agents are given by $q_1(0)=(1,1)$, $q_2(0)=(2,1)$, $q_3(0)=(2,2)$, $q_4(0)=(3,2)$ and $v_i(0)=(0,0), i=1,2,3,4.$ The desired displacements of the formations with acyclic graph and cyclic graph are given by $z_1^*=(-1,1),z_2^*=(1,1),z_3^*(1,-1)$ and $z_1^*=(-1,1),z_2^*=(1,1),z_3^*(1,-1),z_4^*=(-1,-1)$.  The desired velocities are given by $v_i^*=(1,1), i=1,2,3,4.$ The corresponding parameters in the controllers are given as follows: $D^t_i={\rm diag}(1,1), i=1,2,3,4.$ $D_j^f={\rm diag}(1,1), j=1,2,3,4$ for acyclic graph and $j=1,2,3,4,5$ for cyclic graph.

By applying the control laws (\ref{uv}) and (\ref{uc}), Fig. 4 is obtained, which shows the evolution of displacement-based formation errors and velocity tracking errors with acyclic and cyclic graphs. In subfigures (a) and (b), $z_{jx}, z_{jy}, j=1,2,3,4,5$ denote the corresponding displacement errors of edge $j$ along the $x$ and $y$ axis, respectively. In subfigures (c) and (d), $v_{ix}, v_{iy}, i=1,2,3,4$ denote the velocity tracking errors of each agent $i$ along the $x$ and $y$ axis, respectively. It can be observed that all the errors converge to zero, which illustrates Theorem \ref{theo1}. 

\begin{figure}[htbp]    \label{tpdis}
    \centering
\subfigure[Displacement-based formation with acyclic graph]{\includegraphics[width=4cm]{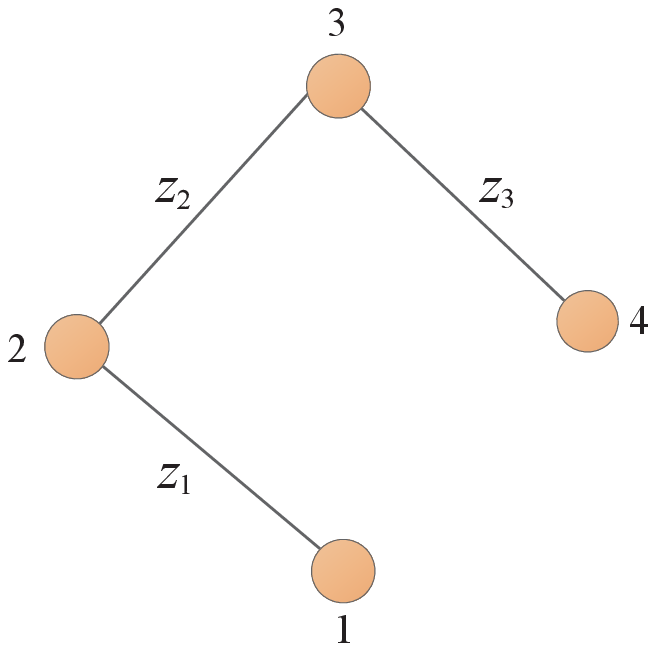}}
\subfigure[Displacement-based formation with cyclic graph]{\includegraphics[width=4cm]{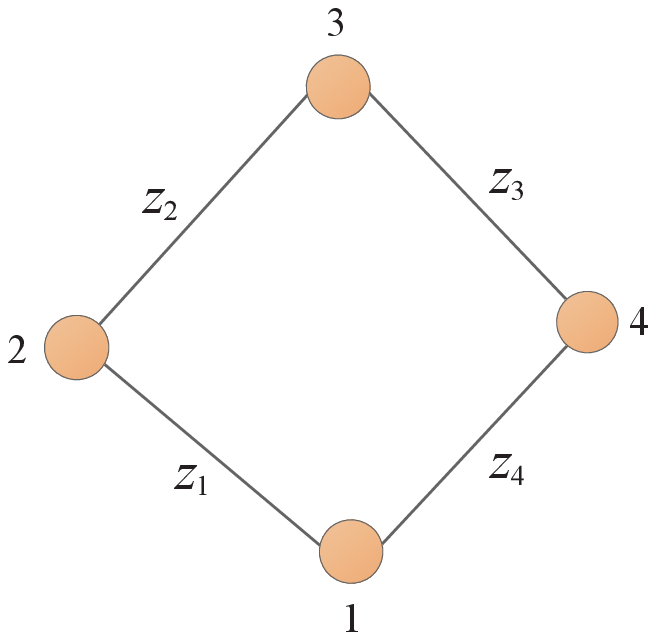}}
    \quad 
    \caption{Network diagrams of displacement-based formations}
\end{figure}

\begin{figure}[htbp]    \label{disev}
    \centering
    \subfigure[Formation errors with acyclic graph]{\includegraphics[width=4cm]{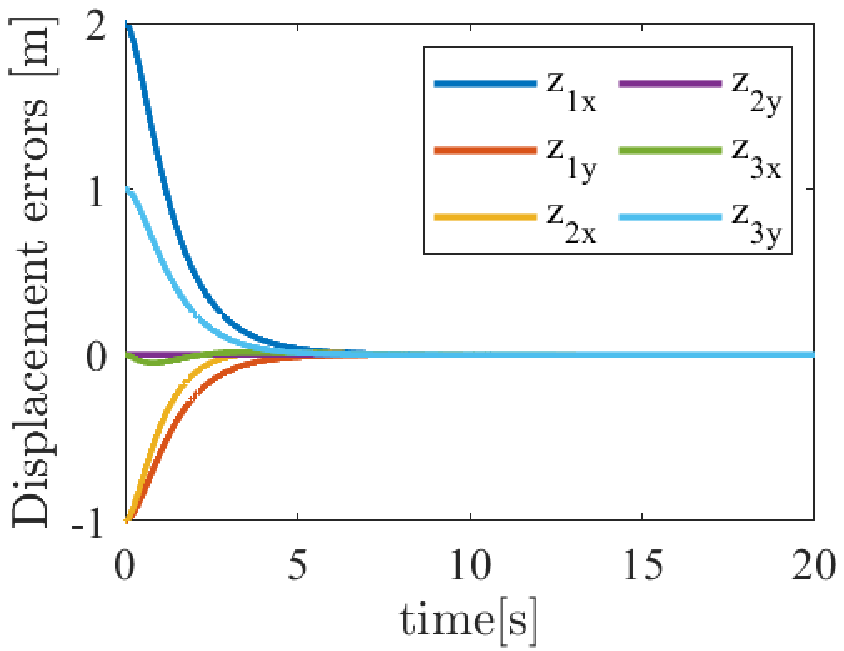}}
    \subfigure[Formation errors with cyclic graph]{\includegraphics[width=4cm]{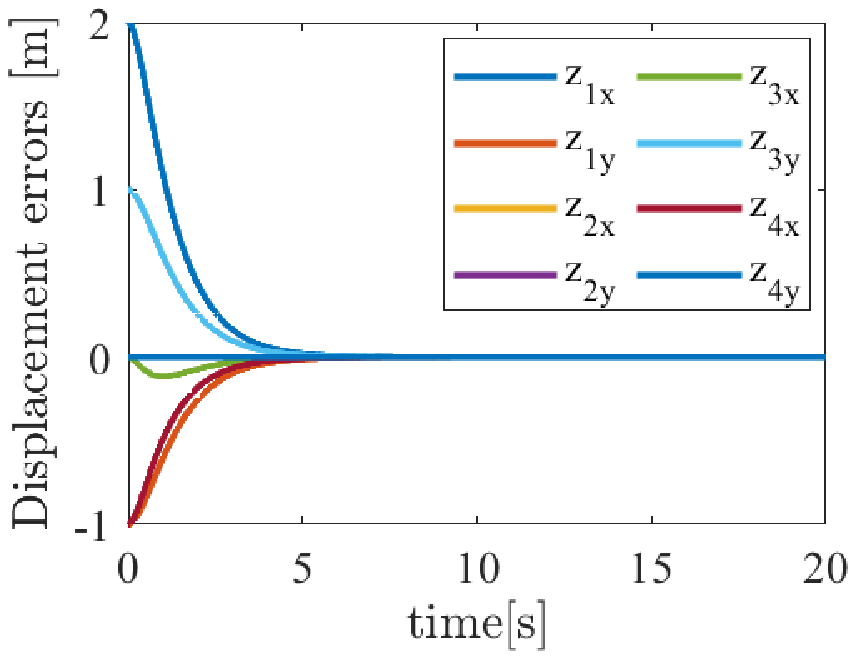}}
    \quad 
    \subfigure[Velocity errors with acyclic graph]{\includegraphics[width=4cm]{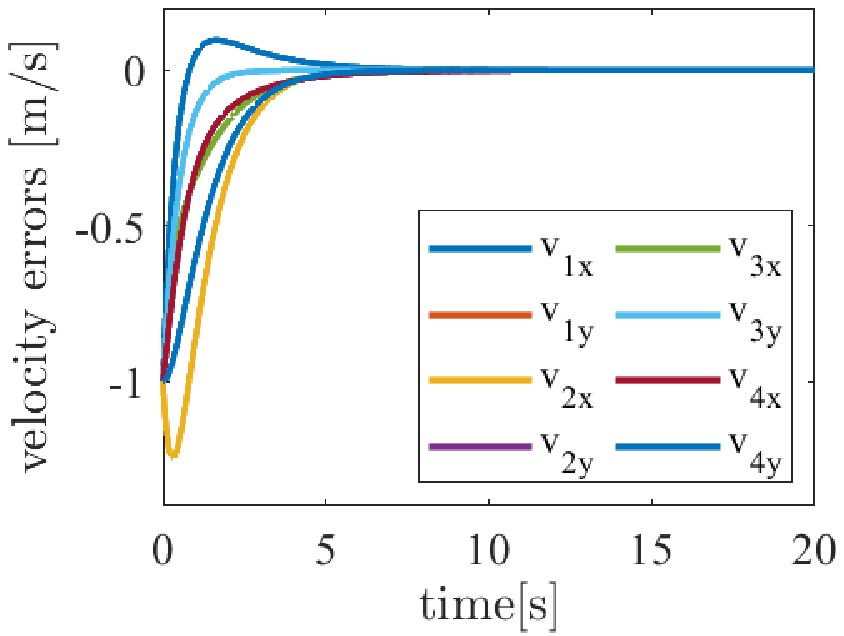}}
    \subfigure[Velocity errors with cyclic graph]{\includegraphics[width=4cm]{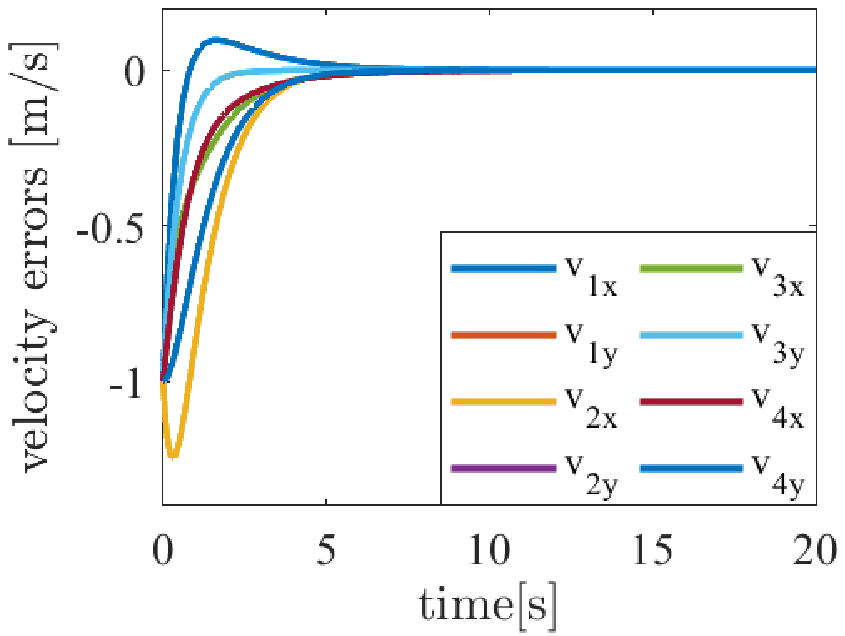}}
    \quad 
    \caption{Evolution of the displacement-based formations}
\end{figure}

\textit{(2) Distance-based formation.} We assume the framework is infinitesimally distance rigid. The network diagram of distance-based formation is shown in Fig. 5 and the corresponding incidence matrix takes the form as
$$B_3=\matr{-1 & 0 & 0 & 1 & 0 \\1 & -1 & 0 & 0 & 1\\ 0 & 1 & -1 & 0 & 0\\ 0 & 0 & 1 & -1 & -1}.$$

The initial positions and velocities of the four agents are given by $q_1(0)=(1,1),q_2(0)=(2,1),q_3(0)=(2,2),q_4(0)=(3,2)$ and $v_i(0)=(0,0), i=1,2,3,4.$ The desired distances of the formation are given by $d_1^*=\frac{\sqrt{2}}{2},d_2^*=\frac{\sqrt{2}}{2},d_3^*=\frac{\sqrt{2}}{2},d_4^*=\frac{\sqrt{2}}{2},d_5^*=2$. The desired velocities are given as $v_i^*=(1,1), i=1,2,3,4.$ The corresponding parameters in the controllers are given as follows: $D_j^d=1, j=1,2,3,4,5.$

By applying the control laws (\ref{uv}) and (\ref{cld}), Fig. 6 is obtained which shows the evolution of distance-based formation errors and velocity tracking errors. In subfigures (a) and (b), $d_{j}, j=1,2,3,4,5$ denote the corresponding distance errors of edge $j$ and $v_{ix}, v_{iy}, i=1,2,3,4$ denote the velocity tracking errors of each agent $i$ along the $x$ and $y$ axis, respectively. It can be observed that all the errors converge to zero, which verifies Theorem \ref{theo2}. 

\begin{figure}[htbp]    \label{tpd}
\begin{center}
\includegraphics[width=4cm]{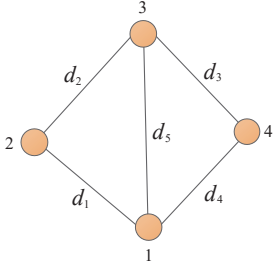}  
\caption{Network diagram of distance-based formation} 
\end{center}
\end{figure}

\begin{figure}[htbp]    \label{dev}
    \centering
    \subfigure[Formation errors]{\includegraphics[width=4cm]{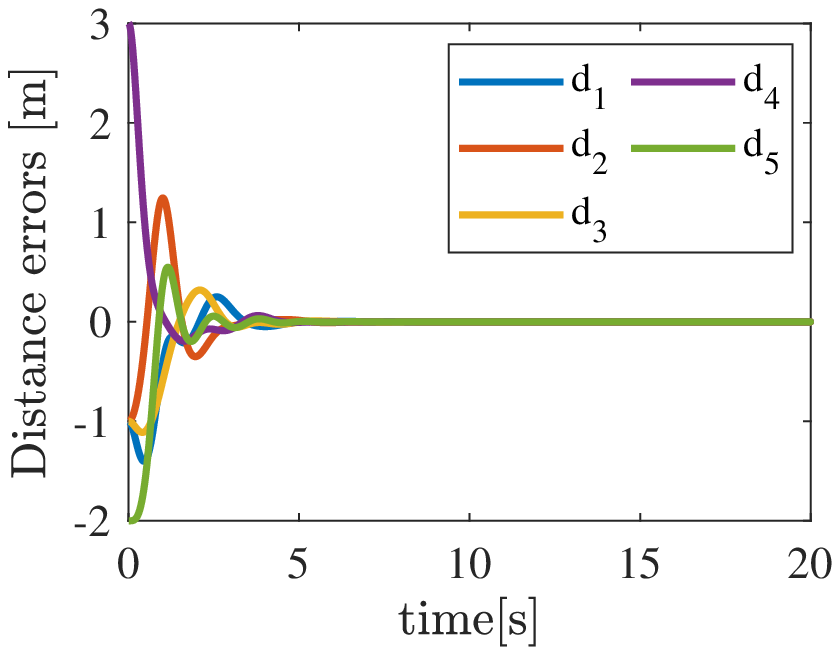}}
    \subfigure[Velocity tracking errors]{\includegraphics[width=4cm]{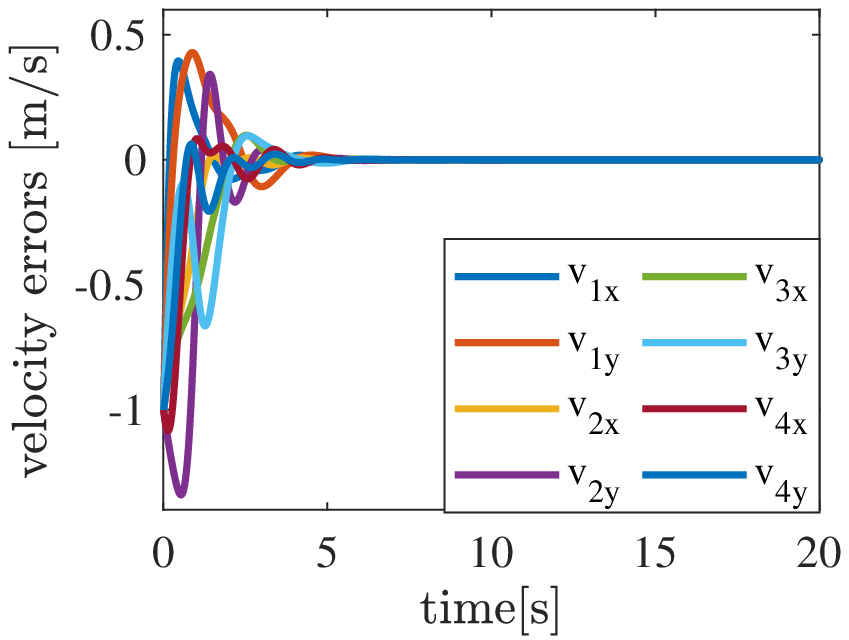}}
    \quad 
    \caption{Evolution of the distance-based formation}
\end{figure}

\textit{(3) Bearing-based formation.} We assume the framework is infinitesimally bearing rigid. The network diagram of bearing-based formation is shown in Fig. 7 and the corresponding incidence matrix takes the form as
$$B_3=\matr{-1 & 0 & 0 & 1 & 0 \\1 & -1 & 0 & 0 & 1\\ 0 & 1 & -1 & 0 & 0\\ 0 & 0 & 1 & -1 & -1}.$$

The initial positions and velocities of the four agents are given by $q_1(0)=(1,1),q_2(0)=(2,1),q_3(0)=(2,2),q_4(0)=(3,2)$ and $v_i(0)=(0,0), i=1,2,3,4.$ The desired bearings of the formation are given by $s_1^*=(-\frac{\sqrt{2}}{2},\frac{\sqrt{2}}{2}),s_2^*=(\frac{\sqrt{2}}{2},\frac{\sqrt{2}}{2}),s_3^*=(\frac{\sqrt{2}}{2},-\frac{\sqrt{2}}{2}), s_4^*=(-\frac{\sqrt{2}}{2},-\frac{\sqrt{2}}{2}),s_5^*=(-1,0)$. The desired velocities are given as $v_i^*=(1,1), i=1,2,3,4.$ The corresponding parameters in the controllers are given as follows: $D_j^b={\rm diag}(1,1), j=1,2,3,4,5.$

By applying the control laws (\ref{uv}) and (\ref{clb}), Fig. 8 is obtained which shows the evolution of bearing-based formation errors and velocity tracking errors. In subfigures (a) and (b),  $s_{jx}, s_{jy}, j=1,2,3,4,5$ denote the corresponding bearing errors of edge $j$ and $v_{ix}, v_{iy}, i=1,2,3,4$ denote the velocity tracking errors of each agent $i$ along the $x$ and $y$ axis, respectively. It can be observed that all the errors converge to zero, which illustrates Theorem \ref{theo3}. 

\begin{figure}[ht]    \label{tpbearing}
\begin{center}
\includegraphics[width=4cm]{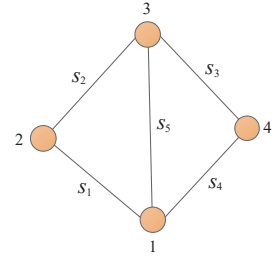}  
\caption{Network diagram of bearing-based formation} 
\end{center}
\end{figure}

\begin{figure}[ht]    \label{bev}
    \centering
    \subfigure[Formation errors]{\includegraphics[width=4cm]{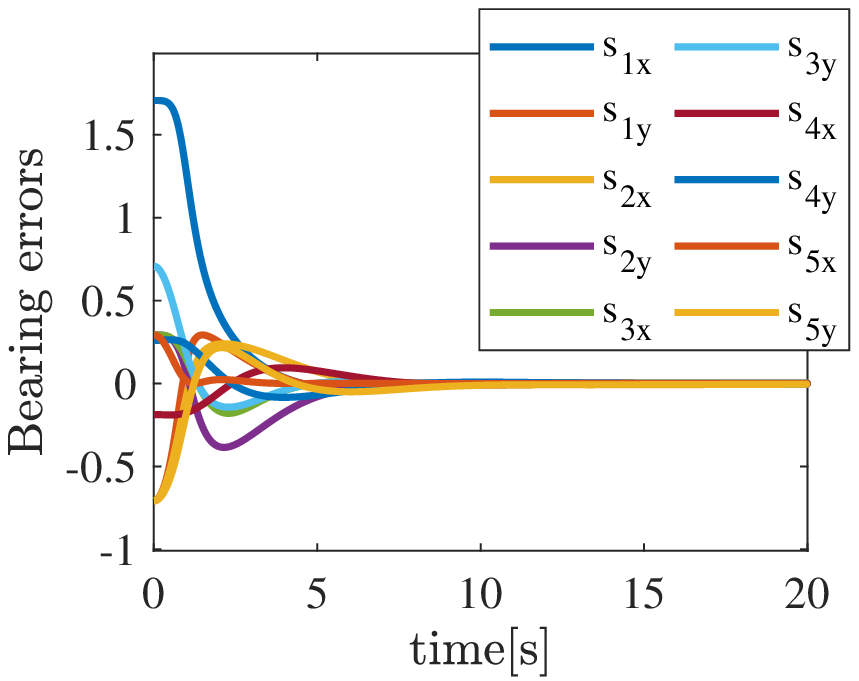}}
    \subfigure[Velocity tracking errors]{\includegraphics[width=4cm]{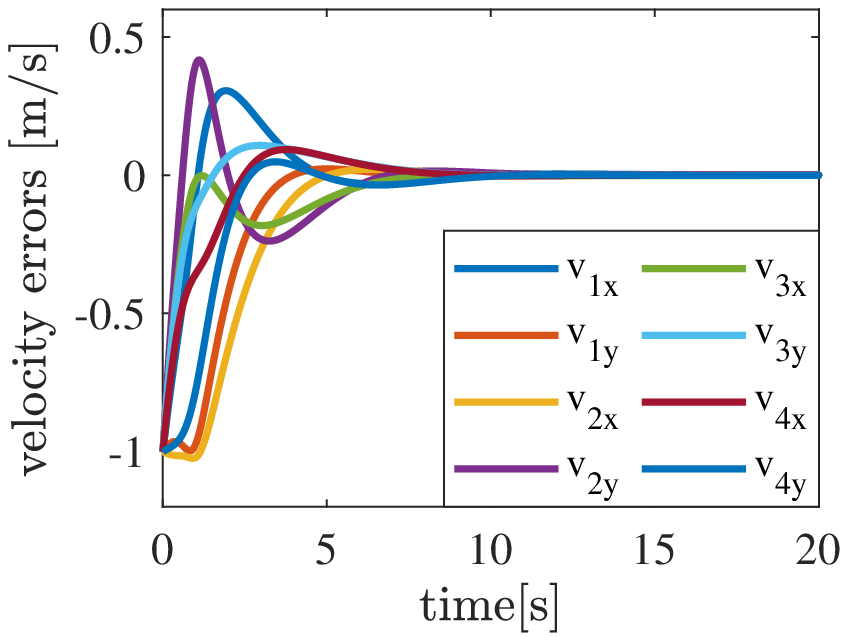}}
    \quad 
    \caption{Evolution of the bearing-based formation}
\end{figure}

\textit{(4) Angle-based formation.} We consider a triangulated Laman graph and assume the framework is infinitesimally angle rigid. The network diagram of angle-based formation is shown in Fig. 9.

The initial positions and velocities of the four agents are given by $q_1(0)=(3.1357,3.1311)$, $q_2(0)=(4.1515,4.0342)$, $q_3(0)=(4.1486,2.1412)$, $q_4(0)=(3.0784,0.0064)$ and $v_i(0)=(0,0), i=1,2,3,4.$ The desired angles of the formation are given by $\theta_1=\frac{\pi}{2},\phi_1^*=\frac{\pi}{4},\theta _2^*=\frac{\pi}{4}, \phi_2^*=\frac{\pi}{4}$. The desired velocities are given by $v_i^*=(3,3), i=1,2,3,4.$ The corresponding parameters in the controllers are given as follows: $d_{l\theta}^a=3,d_{l\phi}^a=3, l=1,2.$

By applying the control laws (\ref{uv}) and (\ref{cla}), Fig. 10 is obtained which shows the evolution of angle-based formation errors and velocity tracking errors. In subfigures (a) and (b),  $\theta_{1}, \theta_{2}, \phi_{1}, \phi_{2}$ denote the corresponding angle errors and $v_{ix}, v_{iy}, i=1,2,3,4$ denote the velocity tracking errors of each agent $i$ along the $x$ and $y$ axis, respectively. It can be observed that all the errors converge to zero, which illustrates Theorem \ref{theo4}. 

\begin{figure}[ht]    \label{tpangle}
\begin{center}
\includegraphics[width=4cm]{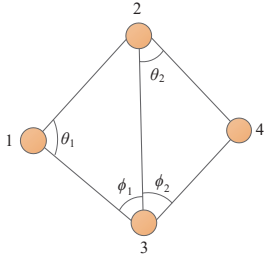}  
\caption{Network diagram of Angle-based formation} 
\end{center}
\end{figure}

\begin{figure}[ht]    \label{aev}
    \centering
    \subfigure[Formation errors]{\includegraphics[width=4cm]{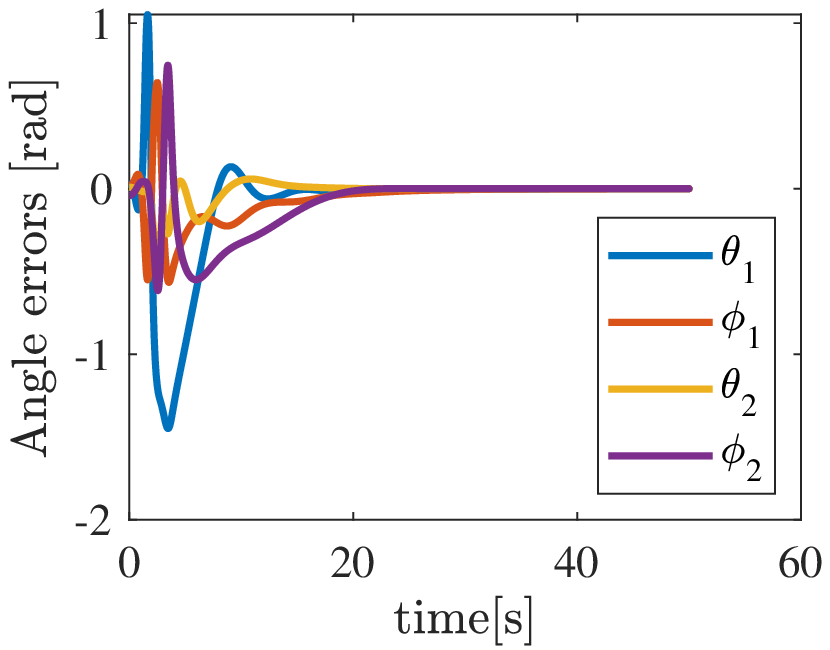}}
    \subfigure[Velocity tracking errors]{\includegraphics[width=4cm]{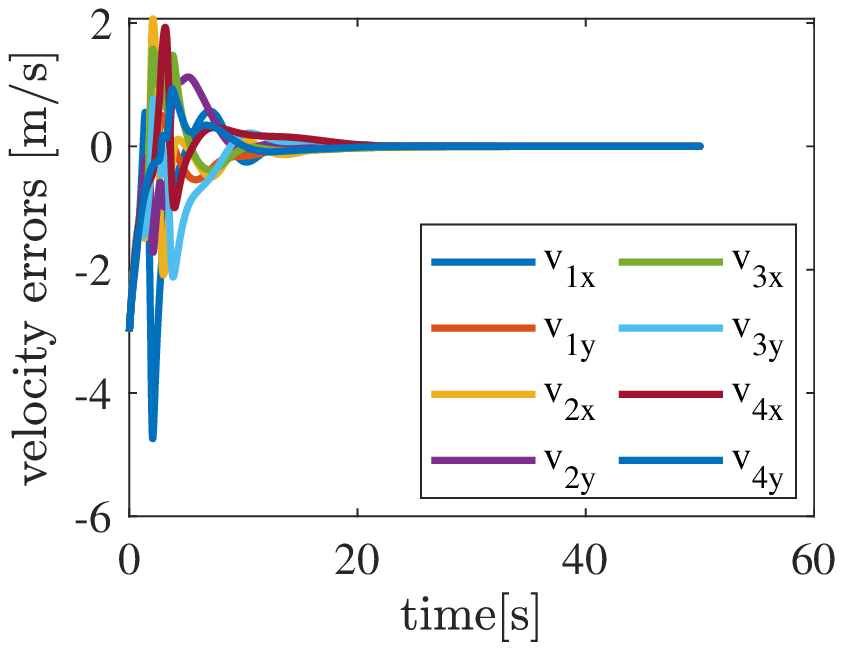}}
    \quad 
    \caption{Evolution of the angle-based formation}
\end{figure}

\begin{figure}[!ht]    \label{hev}
    \centering
    \subfigure[Formation errors]{\includegraphics[width=4cm]{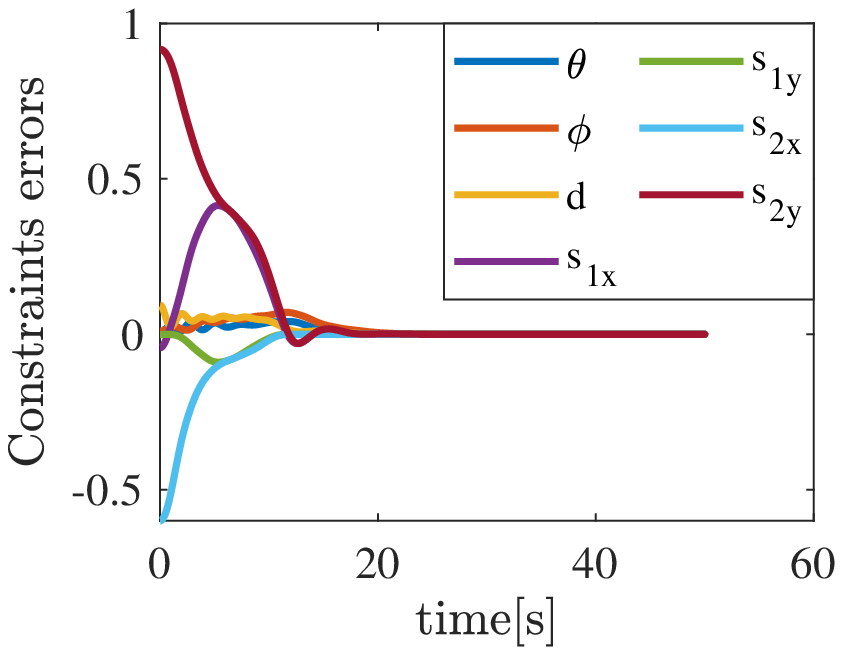}}
    \subfigure[Velocity tracking errors]{\includegraphics[width=4cm]{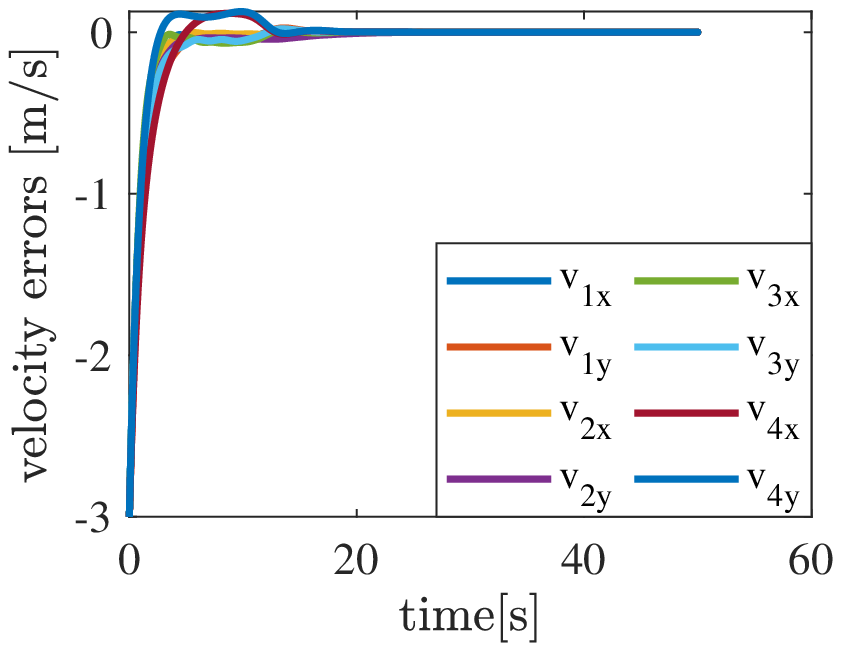}}
    \quad 
    \caption{Evolution of the formation with heterogeneous constraints}
\end{figure}

\textit{(5) Formation with heterogeneous constraints.} We consider a triangulated Laman graph and assume the framework is completely infinitesimally rigid. The network diagram of the formation with heterogeneous constraints is shown in Fig. 2.

The initial positions and velocities of the four agents are given by $q_1(0)=(1.0844,2.1311)$, $q_2(0)=(2.1831,3.0071)$, $q_3(0)=(2.1584,1.1698)$, $q_4(0)=(3.1919,2.1868)$ and $v_i(0)=(0,0), i=1,2,3,4.$ The desired constraints of the formation are given by $\theta^*=\frac{\pi}{2},\phi^*=\frac{\pi}{4}, d^*=\sqrt{2}, s_1^*=(0,1),s_2^*=(1,0)$. The desired velocities are given by $v_i^*=(3,3), i=1,2,3,4.$ The corresponding parameters in the controllers are given as follows: $d_{\theta}^a=3,d_{\phi}^a=3, D^d=1, D_1^b={\rm diag}(1,1), D_2^d={\rm diag}(1,1).$

By applying the control laws (\ref{uv}) and (\ref{clh}), Fig. 11 is obtained which shows the evolution of formation errors and velocity tracking errors. In subfigures (a),  $\theta, \phi, \d, s_1$, and $s_{2}$ denote the corresponding angle, distance, and bearing errors and $v_{ix}, v_{iy}, i=1,2,3,4$ denote the velocity tracking errors of each agent $i$ along the $x$ and $y$ axis, respectively. It can be observed that all the errors converge to zero, which illustrates Theorem \ref{theo5}. 

\section{Conclusions}
In this paper, a passivity approach in pH form for formation stabilization and velocity tracking is proposed. For displacement-based formation, the proposed framework can be applied to not only   acyclic graphs but also   cyclic graphs. For rigid formations defined by one kind of constraint, the local convergence of the proposed formation system is guaranteed by establishing the relationship between infinitesimal rigidity and the time derivative of the Hamiltonian. For rigid formations defined by a combination of heterogeneous  constraints, the local convergence to a desired formation shape is ensured by infinitesimally completely rigid conditions, which should be further investigated for future research. Several simulations are given to illustrate the validity of the framework.   






\bibliography{root}   
\bibliographystyle{unsrt}

\end{document}